\documentclass[conference,10pt]{IEEEtran} %[conference, 10pt]
\IEEEoverridecommandlockouts
\usepackage{amsmath,amssymb,amsfonts}
\usepackage[ruled]{algorithm}
\usepackage[noend]{algpseudocode}
\usepackage{multirow}
\usepackage{graphicx}
\usepackage{textcomp}
\usepackage{xcolor}
\usepackage{comment}
\usepackage{biblatex}
\addbibresource{references.bib}

\algblockdefx[Upon]{Upon}{EndUpon}%
[1]{{\bf upon} (#1) do}%
{{\bf end upon}}

\newcommand{\mb}[1]{\textcolor{red}{#1}}

\newtheorem{theorem}{Theorem}
\newtheorem{defi}{Definition}

\newenvironment{proof}{\noindent {\sc Proof. }}{\hfill$\Box$}

% Proof counters
\newcounter{cur_exp}[section]
\newcounter{old_exp}[section]
\newcommand{\curexp}{\arabic{cur_exp}}
\newcommand{\oldexp}{\arabic{old_exp}}
\newcommand{\gameold}{\ensuremath{G_{\oldexp}}}
\newcommand{\gamecur}{\ensuremath{G_{\curexp}}}

% Proof experiments
\newcommand{\initexp}{\vspace{5pt}%
                      \setcounter{cur_exp}{0} %
                      \setcounter{old_exp}{-1} %
                      \noindent %
                      \textbf{Game} \ensuremath{\mathbf{\curexp}}. %
                     }

\newcommand{\newexp}{\vspace{4pt}%
                     \addtocounter{cur_exp}{1}%
                     \addtocounter{old_exp}{1}%
                     \noindent
                     \textbf{Game} \ensuremath{\mathbf{\curexp}}. %
                     }

\def\Adv{{\sf Adv}}
\def\cA{{\mathcal A}}
\def\cO{{\mathcal O}}
\def\cS{{\mathcal S}}

\def\cD{{\mathcal D}}
\begin{document}

\title{Chirotonia: A Scalable and Secure e-Voting Framework based on Blockchains and Linkable Ring Signatures}
\author{\IEEEauthorblockN{Antonio Russo \\
Antonio Fernández Anta}
\IEEEauthorblockA{\textit{IMDEA Networks Institute} \\
Madrid, Spain \\
email: antonio.russo@imdea.org \\
email: antonio.fernandez@imdea.org}
\and
\IEEEauthorblockN{María Isabel González Vasco}
\IEEEauthorblockA{\textit{Departamento de Matemática Aplicada} \\
\textit{Universidad Rey Juan Carlos}\\
Mostoles, Madrid, Spain \\
email: mariaisabel.vasco@urjc.es}
\and
\IEEEauthorblockN{Simon Pietro Romano}
\IEEEauthorblockA{\textit{DIETI} \\
\textit{University of Napoli, Federico II}\\
Naples, Italy \\
email: spromano@unina.it}
}

\maketitle

\begin{abstract}
In this paper we propose a comprehensive and scalable framework to build secure-by-design e-voting systems. Decentralization, transparency, determinism, and untamperability of votes are granted by dedicated smart contracts on a blockchain, while voter authenticity and anonymity are achieved through (provable secure) linkable ring signatures. These, in combination with suitable smart contract constraints, also grant protection from double voting. Our design is presented in detail, focusing on its security guarantees and the design choices that allow it to scale to a large number of voters. Finally, we present a proof-of-concept implementation of the proposed framework, made available as open source.
\end{abstract}

\begin{IEEEkeywords}
E-voting, linkable ring signatures, blockchain, smart contracts, Ethereum.
\end{IEEEkeywords}

\section{Introduction}

Building an e-voting system represents a big challenge, due to strong requirements in terms of confidentiality, integrity and availability. Such a system must preserve these properties not only when facing external attackers, but also in the presence of malicious insiders, like regular or administrative users who misbehave and try to undermine the correct operation of the e-voting system.
It is usually expected from an electronic voting scheme that it provides at least the same guarantees as those of a classic (paper-based) voting system, while adding the advantages of a modern electronic system, like remote participation and information processing speed. 

In this paper we propose a decentralized, secure, and scalable framework for e-voting systems based on linkable ring signatures, a blockchain, and smart contracts. The blockchain provides a reliable repository for the data used by the e-voting system, and the support for the smart contracts that define the computation performed in each phase of a voting process. The blockchain also relieves from the need of having a centralized solution for a verified bulletin board. Then, the voting ballots are signed with linkable ring signatures, which is a cryptographic tool that allows only legal voters to sign, and detects duplicated signers without revealing identities or ballot contents. In this paper we also adopt a linkable ring signature scheme using Elliptic Curve Cryptography (ECC), which grants very short keys and signatures, hence making the framework scalable.

\subsection{Related Work}

There are a number of secure e-voting systems proposed in the literature. The Helios system~\cite{DBLP:conf/uss/Adida08} is generally used as a reference due to its simple voting workflow, and its privacy and verifiability properties achieved through zero-knowledge proofs~\cite{DBLP:journals/siamcomp/GoldwasserMR89}. Its main drawback is that it requires the existence of a trusted central server, which is responsible for identifying voters, as well as for  storing, shuffling, decrypting, and publishing  ballots.

The Belenios system~\cite{DBLP:conf/birthday/CortierGG19} is based on Helios but improves it by providing universal verifiability, even in the presence of a dishonest bulletin board (the element used for making data publicly available for verifiability
checks). As Helios, Belenios has a complete open source implementation that is able to handle a reasonably large number of voters.
However, also as Helios, it still requires a trusted central server for the major process phases in order to ensure vote correctness.

Benaloh and Tuinstra~\cite{DBLP:conf/stoc/BenalohT94} proposed a receipt-free voting system able to guarantee coercion resistance, ballot privacy, and verifiability via interactive zero-knowledge proofs. Unfortunately, their system requires a trusted central authority that continuously interacts with the voters during the voting process. Moreover, voters must not be able to communicate with external entities during
the voting process. Compared with classic paper ballot voting, the overhead of this system in terms of deployment effort and actions required from the voters is unfortunately not trivial. The advantages provided are the determinism and quick aggregations of the result. Remote voting is hardly achievable.

The D-DEMOS suite \cite{DBLP:journals/compsec/ChondrosZZDMPDK19} (which is built on top of Helios \cite{DBLP:conf/uss/Adida08}) takes a step further in the guarantees and properties that are ensured. It adopts Verified Secret Sharing \cite{DBLP:conf/eurocrypt/Pedersen91a}, Zero Knowledge Proofs, and applies segregation of duties for the main tasks of the voting process. All the properties are provided in the assumption that, for each vote, an Election Authority setups the system, distributes the required cryptography information to all the involved actors (included voters, with different data for each of them), and gets destroyed without colluding with any party. A key requirement of D-DEMOS in order to ensuring the system properties is that the Election Authority gets destroyed and does not share the secret information it generated on behalf of the other entities. In fact, even if its computations can be verified, it is not possible to avoid that a malicious Election Authority shares secret information of voters with other parties; this would enable them to vote on behalf of the voters. In our approach all the tasks necessary for the voting session setup are publicly verifiable, they do not involve any secret generation on behalf of other parties and, for each new vote session, voters can directly compute and submit a new valid ballot with the information they already hold.

To our knowledge, in the distributed and decentralized setting we want to provide, there are two main relevant works. First, Liu and Wang~\cite{DBLP:journals/iacr/LiuW17} propose an e-voting system that uses a blockchain paired with blind signatures.
Voters' privacy is granted by computing blind signatures of ballots, with third parties being in charge of identifying voters and signing back data. This requires the identification process to be run for each vote and for each identification authority. Then the blockchain acts as an auditable communication channel among parties. In contrast, in our proposal the identification process can be reused over many voting sessions, voters can submit their ballots without waiting for other entities to come into play, and the blockchain is an active validator for the used data.

Second, Lyu et al.~\cite{DBLP:conf/trustcom/LyuJWNAF19} is the existing work more relevant to our framework, since it also combines a blockchain, smart contracts, and ring signatures. Their voting protocol is very similar to the one proposed here, yet there are some relevant differences. For instance, in their proposal, voters are responsible for running a distributed key generation protocol, thus involving a significant amount of processing and communication for large pools of voters, making it hard to go beyond a small number of users (the authors use up to $40$ users in their tests). In contrast, as we will see, our design smoothly scales up to thousands of voters.

\subsection{Contributions}

In this work we propose a framework to build e-voting systems based on a blockchain, smart contracts, and linkable ring signatures. The framework is realized as a set of protocols and tools that can be used in a wide range of e-voting scenarios. More precisely, we provide:

\begin{itemize}
    \item A modular approach to build distributed, secure, verifiable, and scalable e-voting systems.
    \item A set of tools to implement such systems in different e-voting scenarios.
    \item A formal security analysis supporting our choice of a linkable ring signature scheme, as proposed in~\cite{DBLP:conf/acisp/LiuWW04}.
    \item A proof-of-concept implementation that can be run in any Ethereum-compliant blockchain, demonstrating the practical feasibility of our approach.
\end{itemize}

\subsection{Paper Roadmap}
In Section~\ref{sec:preliminaries} we introduce the blockchain properties we leverage and our main cryptographic tool: linkable ring signatures. Section~\ref{sec:lsag-schema} presents the concrete signature scheme adopted, describing the security properties it satisfies (our full paper version provides a complete formal security analysis), while Section~\ref{sec:storageEfficiency} focuses on its storage efficiency. Our e-voting framework is presented in Section~\ref{sec:voting-framework}, where we describe the involved actors and their interactions. Section~\ref{sec:deployment} provides the details of the \textit{identification} process and the proposed ways to guarantee \textit{confidentiality} and \textit{anonymity}.  Next, a complete list of  features achieved by our design is depicted in Section~\ref{sec:requirements}.
Finally, Section~\ref{sec:proof-of-concept} presents our choices for the implemented proof of concept, and Section~\ref{sec:conclusion} wraps up with some final conclusions.

\section{Building Blocks}
\label{sec:preliminaries}

\subsection{Blockchain and Smart Contracts}
The chain-of-blocks data structure, paired with a Byzantine Fault Tolerant consensus  (see~\cite{DBLP:conf/osdi/CastroL99,DBLP:journals/toplas/LamportSP82}) results in a powerful and nifty way to realise an untamperable Distributed Ledger. We will not go into details of blockchain foundations and its wide range of implementations. Indeed, we model the blockchain properties as a replicated database and a deterministic execution environment, resilient to a bounded set of Byzantine failures.
We refer to the code that is stored and executed in the blockchain as Smart Contract~\cite{DBLP:journals/firstmonday/Szabo97} and to the activation of a piece of it as a transaction. Thanks to the chain-of-blocks data structure, any modification to the system state is audited in an untamperable manner on the blockchain itself as part of the state (e.g., a transaction that calls a function in a smart contract is stored including information on the called function and all the parameters passed as arguments).

\subsection{Linkable Ring Signatures}
\label{sec:RingSignature}

Ring signatures are cryptographic constructions, first introduced by  Rivest, Shamir and Tauman~\cite{DBLP:conf/asiacrypt/RivestST01}, allowing for the creation of signatures on behalf of a group of signers, while hiding the identity of the actual signer among the group.
In~\cite{DBLP:conf/acisp/LiuWW04}, so-called \emph{linkable ring signatures} were introduced as an extension to ring signatures, adding the feature of providing a public way of determining whether two signatures have actually been produced by the same signer. As pointed out in~\cite{DBLP:conf/eurocrypt/0001DHKS19}, these constructions are particularly suited for e-voting schemes, as linkability prevents voters from casting multiple votes while anyonymity is preserved within the group of registered voters. Formally:

\begin{defi} [Linkable ring signature scheme] A \emph{linkable ring signature scheme} is defined by four probabilistic polynomial-time algorithms $(\textsf{KeyGen, Sign, Verify, Link})$ such that:
\begin{itemize}
\item{$\textsf{KeyGen}(1^{\ell})$}, the key generation algorithm, takes as an input the security parameter $\ell$ and outputs a pair $(pk,sk)$ of public (verification) and secret (signing) keys,
\item{$\textsf{Sign}(1^{\ell}, sk_\pi, pk_1,\dots, pk_n,m)$},  the signature algorithm,  takes as input the security parameter $\ell,$ a secret key $sk_\pi,$ for some $\pi\in\{1,\dots, n\}$, a message $m$ (which is assumed to belong to a public message space $\mathcal{M}_{\ell}$) and a list of public keys $pk_1,\dots, pk_n$, and outputs a signature $\sigma.$ 
\item{$\textsf{Verify}(1^{\ell}, pk_1,\dots, pk_n,m,\sigma)$}, the verification algorithm, takes as input the security parameter $\ell,$ a list of public keys $pk_1,\dots, pk_n$, a message  $m\in \mathcal{M}_{\ell}$ and a signature $\sigma$, and outputs a bit $b\in \{0,1\}$ (namely, $1$ if the signature is recognized as valid w.r.t. the list of public keys, and $0$ otherwise).
\item{$\textsf{Link}(\sigma_1,\sigma_2,m_1,m_2)$}, the linking algorithm, takes as an input two signatures, $\sigma_1,\sigma_2$, and two messages, $m_1,m_2$, and outputs a bit $\beta\in \{0,1\},$ indicating whether the two signatures have been produced by the same user or not.  
\end{itemize}
\end{defi}

There are two obvious correctness requirements to add to the above definition imposing that, indeed, properly constructed signatures can be verified and linked as intended. Furthermore, our usage of a linkable signature scheme requires that certain security properties are fulfilled, which we informally list below:
\begin{itemize}
\item{\it Unforgeability}:  only those belonging to the designed group of voters can produce a valid signature (which will be validated with respect to the election census).

\item{\it Anonymity:} the actual identity of a signer remains hidden within the signing group.

\item{\it Linkability:} different uses of the same secret key can be identified through the public information.  

\item{\it Non-framability:}  it is not possible for an adversary to interfere with the correctness of linkability, even if he/she is able to corrupt (a polynomial number of) group members. Namely, even seeing a number of signatures of a prescribed honest user, and controlling some legitimate signers, he/she should not be able to produce a new signature that will link with the honest user without his/her secret key
\footnote{Deviating from~\cite{DBLP:conf/acisp/LiuWW04} we  work with the (stronger, and more realistic) definition from~\cite{DBLP:conf/eurocrypt/0001DHKS19} where the case of colluding malicious users is captured, as the adversary is allowed to query for (a polynomial number of) secret keys.}

\end{itemize}

\begin{comment}
In general, our system can rely on whatever signature schema that provide, deterministicaly or with negligible probability to fail (so in respect to a security parameter k that here is omitted for simplicity), the following 4 functions:
\begin{equation}
    G() \rightarrow (s, P)
\end{equation}
\begin{equation}
    SIG(s, P_1, ..., P_N, m) \rightarrow \sigma
\end{equation}
\begin{equation}
    V(\sigma, P_1, ..., P_N, m) \rightarrow 1/0
\end{equation}
\begin{equation}
    LINK(\sigma, \sigma') \rightarrow 1/0
\end{equation}
where $G$ is able to generate at each call, different key pair suitable for digital identities, $SIG$ is able to produce a proof $\sigma$ such that, if $P \in \{P_1, ..., P_N\}$, for any message $m$, $V(SIG(s, P_1, ..., P_N, m), P_1, ..., P_N, m) = 1$  and $LINK(SIG(s, P_1, ..., P_N, m), SIG(s', P_1, ..., P_N, m')) = 1$ if and only if $s = s'$ for any $m$ and $m'$ in respect to the same set of public keys $\{P_1, ..., P_N\}$.\newline
\end{comment}

In the appendices, we give formal definitions (Appendix~\ref{appendix:ec-lsag}) and security proofs (Appendix~\ref{appendix:security-proofs}) for the concrete instantiation of a linkable ring signature scheme we use in our design, which we describe in the next section and have implemented in our proof of concept (see Section~\ref{sec:proof-of-concept}).  While we follow the approach of~\cite{DBLP:conf/acisp/LiuWW04}, it is worth mentioning two main differences in our security analysis. First, we allow for corruptions in the unforgeability definition, which  are not considered in~\cite{DBLP:conf/acisp/LiuWW04}. This is particularly relevant for our use case, as a collusion of malicious users of the system should indeed not be able to produce a valid signature that can be verified using a set of public keys from honest users. Second, we include non-framability as a desirable property, which in particular for our use case implies that a collusion of voters may not produce a valid signature (e.g., vote) that can be linked with another (honest) user, thus preventing him/her from casting a vote.

\section{Proposed Linkable Ring Signature Scheme}

\subsection{Signature Scheme}
\label{sec:lsag-schema}
Elliptic curve cryptography (ECC) constructions make use of an elliptic curve defined over a finite field $K,$ which is fixed and made public through two coefficients $a,b$ (see~\cite{ECC}). Namely, the curve is defined as the set of points $(x,y)\in K^2$  fulfilling $y^2 = x^3 +ax+b.$ Moreover,  a group law $+$  is defined on this set of points, and typically a cyclic subgroup $E$ of this group is fixed,  selecting a generator $G.$ Under certain conditions, it is assumed that the associated \emph{discrete logarithm problem} --- {which in additive notation is described as the problem of finding the integer $a$ such that a given point $P\in E$ is the result of adding up $G$ with itself $a$ times, i.e., $P =a\cdot G$,}  --- is computationally hard. Moreover, many cryptographic constructions rely on the hardness of  two related problems:  \emph{Computational} and \emph{Decisional Diffie Hellman}, usually referred to as CDH and DDH (see~\cite{DH} for details on these problems).

\medskip

We describe now our modification of the proposal in~\cite{DBLP:conf/acisp/LiuWW04}. Once the security parameter $\ell$ is fixed, we assume there is a public  directory specifying a cyclic group $E$ of order $g$ within the group of points of an elliptic curve over a finite field $\mathbb{F}_p$ of  size $p$ ($g, p$ are polynomial in $\ell$). Moreover, there is also a fixed generator $G$ for $E$ that we assume publicly known. Furthermore, two different (public) hash functions $H$ and $\mathit{H2P}$ are chosen and made public:
$$
        H:  \{0,1\}^* \rightarrow \mathbb{F}_p, \,  
        \mathit{H2P}:  \{0,1\}^* \rightarrow E.
$$

 With these ingredients, a linkable ring signature scheme can be  defined as follows:
\medskip 

\noindent{\bf Key Generation.} $\textsf{KeyGen}(1^{\ell})$ takes as an input the security parameter $\ell$ and selects uniformly at random  an integer $sk$ within $\{1,\dots, g-1\}.$ Then, it sets $pk = sk \cdot G$ and outputs the resulting pair $(pk,sk)$ of public (verification) and secret (signing) keys.

\medskip

\noindent{\bf Signature Generation.} $\textsf{Sign}(1^{\ell}, sk_{\pi}, pk_1,\dots, pk_n,m)$, given a public key tuple \(\mathit{PK}_n\) $=(pk_1,\dots, pk_n)$, a pair \((sk_\pi, pk_\pi)\), where $\pi\in \{ 1,\dots,n\}$, starts by setting  
$$
            L = \mathit{H2P}(\mathit{HPK}(\mathit{PK}_n)), \,     T = sk_\pi \cdot L.
$$

Here, $\mathit{HPK}$ is a hashing procedure (that could actually be replaced by others, as in the generic construction of~\cite{DBLP:conf/acisp/LiuWW04}) defined for public key tuples  as follows. Given
$$
    \mathit{PK}_i = (pk_1, ..., pk_i) \qquad 0< i\leq n,
$$
we set ($||$ stands for the string concatenation operator),
$$
%\begin{equation}
    %\label{eq:hpk}
    \mathit{HPK}(\mathit{PK}_i) := 
  \begin{cases} 
   H(pk_1) & \text{if } i = 1 \\
   H(\mathit{HPK}(\mathit{PK}_{i-1})||pk_i)       & \text{if } i > 1.
  \end{cases}
$$
%\end{equation}

Let us denote by $x \in_R X $ the selection, uniformly at random, of an element $x$ from the set $X$. Values $s_1,\dots, s_n$ are computed, starting from index \(\pi\) as follows: 
$$
    u \in_R \{1,\dots, g-1\}, \,   s'_{\pi} \in_R \{1,\dots, g-1\} $$ $$
    A_{\pi} = s'_{\pi} \cdot G + u \cdot  pk_{\pi}, \,   B_{\pi} = s'_{\pi} \cdot L + u \cdot  T $$
    $$c_{\pi} = H(m || T ||A_{\pi}||B_{\pi})).$$

 %\mb{It read $\mathit{PK}_{\pi}$ above in the definition of $A_{\pi}$ but I´m sure you meant $pk_{\pi}$. Same below for %$A_i$} \ar{yes}
while for coefficients \(i = \pi+1, ..., n, 1, ..., \pi-1\):
$$     s_i \in_R \{1,\dots, g-1\}, \,    A_i = s_{i} \cdot G + c_{i-1} \cdot pk_{i} $$
$$     B_i = s_{i} \cdot L + c_{i-1} \cdot T, \,     c_i = H(m || T ||A_i||B_i)).$$

Finally, \(s_{\pi}\) is computed as $s_{\pi} = s'_{\pi} + sk_{\pi}(u - c_{\pi-1})$ and $c = c_n$ (All computations in the subindexes $i$  above are done \(mod \, p\)).
The output signature $\sigma$ is defined as
$\sigma = (\mathit{PK}_n, T, s_1,\dots, s_n, c).$

\medskip

\noindent{\bf Signature Verification.} $\textsf{Verify}(1^{\ell}, pk_1,\dots, pk_n,m,\sigma)$ first performs the following computations from the input values $m, \mathit{PK}_n$ and {$\sigma$}  
$$        L = \mathit{H2P}(\mathit{HPK}(\mathit{PK}_n)), \,  c_0 = c
$$.
Furthermore, it builds, for \(i = 1, ..., n\),
$$        A_i = s_{i} \cdot G + c_{i-1} \cdot pk_{i}, \,  B_i = s_{i} \cdot L + c_{i-1} \cdot T,\,$$
$$        c_i = H(m||T||A_i||B_i).
$$
Finally, if \(c_n\) equals \(c_0\), it outputs $1$. Otherwise, it outputs $0$ and rejects the signature as invalid.  

\medskip

\noindent{\bf Linkability Check.} $\textsf{Link}(\sigma_1,\sigma_2,m_1,m_2)$, given $$\sigma_1 = (\mathit{PK}_n, T_1, s_1,\dots, s_n, c), \,  \sigma_2 = (\mathit{PK}_n, T_2, \hat{s}_1,\dots, \hat{s}_n, \hat{c}),$$ outputs $1$ if an only if $T_1 = T_2 .$

\medskip

We are able to formally assess  the following result: 

\begin{theorem}  The ring signature scheme presented attains unforgeability, anonymity, linkability, and non-framability in the random oracle model\footnote{proofs in the \emph{random oracle model} work under the assumption that the hash functions involved behave as idealized/truly random functions (see~\cite{RSAOEP})} and under the DDH assumption in the underlying cyclic group $E$ generated by $G$. 
\end{theorem}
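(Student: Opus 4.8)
The plan is to decompose the theorem into four separate game-based reductions in the random oracle model, all anchored in the hardness of DDH in $E$. Since a DDH solver can be built from a CDH solver and a CDH solver from a DL solver, DDH-hardness implies CDH-hardness, which implies DL-hardness; hence a single DDH assumption suffices to support every reduction. Throughout I would model $H$ and $\mathit{H2P}$ as random oracles and exploit their programmability both to simulate signatures without secret keys and to rewind the adversary. As a preliminary step I would record the correctness identity: with $s_\pi = s'_\pi + sk_\pi(u-c_{\pi-1})$ and $T = sk_\pi\cdot L$, the verifier recomputes exactly the prover's $A_\pi, B_\pi$, so the challenge chain closes ($c_n=c_0$). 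This also exhibits each signature as an OR-proof over the ring of the statement ``$(pk_j,T)$ is a DDH tuple in base $(G,L)$'', i.e.\ $T=\mathrm{dlog}_G(pk_j)\cdot L$ for some ring member $j$; this soundness view is what drives the linkability and non-frameability arguments.

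For anonymity I would observe that the only signer-dependent component of $\sigma$ is the tag $T = sk_\pi\cdot L$. A short statistical argument shows that $(s_1,\dots,s_n,c)$, together with the induced $A_i,B_i$, is distributed identically for every choice of $\pi$: each $s_i$ is uniform in $\{1,\dots,g-1\}$ (in particular $s_\pi$, since $s'_\pi$ is uniform and independent), and the $c_i$ are random-oracle outputs that can be programmed consistently. It then remains to hide $T$. I would introduce a hybrid in which $T$ is replaced by a uniformly random element of $E$; distinguishing this hybrid from the real game is exactly a DDH task on $(G, pk_\pi, L, sk_\pi\cdot L)$. Applying this to both challenge indices $i_0,i_1$ bounds the adversary's advantage by (twice) the DDH advantage.

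For unforgeability (with corruptions) and non-frameability I would invoke the forking lemma. In the unforgeability reduction the simulator embeds a DL challenge $Y=x\cdot G$ into one honest public key after guessing an index the adversary will not corrupt (a polynomial loss); all other keys carry known secrets, so corruption queries are answered directly, and signing queries are simulated by sampling $s_1,\dots,s_n,c$ at random and back-patching $H$ so the chain closes. From a forgery on a ring containing $Y$, rewinding and reprogramming the critical $H$-query that fixes $c_{j-1}$ yields two accepting transcripts agreeing on $A_j,B_j$ but differing in $c_{j-1}$; the relation $A_j=s_j\cdot G+c_{j-1}\cdot pk_j$ then lets one solve for $x=\mathrm{dlog}_G(Y)$. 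Non-frameability is analogous: the honest user's public key is set to $Y$, its signatures are simulated by back-patching, and a forged signature linking to $U$ must carry $T=sk_U\cdot L$. Since $L$ generates $E$, soundness forces the used witness to be $sk_U$ itself, so forking at $U$'s index extracts $sk_U$, again contradicting DL. Linkability reduces to unforgeability by pigeonhole: for a fixed ring, soundness pins each valid tag to $sk_j\cdot L$ for a ring member, so an adversary controlling $t$ secret keys that outputs $t+1$ pairwise-unlinked valid signatures must have produced one verifying under a tag tied to an uncorrupted key, i.e.\ a forgery.

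The main obstacle I expect is the unforgeability reduction. Making the signing simulation statistically faithful while the embedded key's secret is unknown requires careful random-oracle back-patching, together with a bound on the failure probability coming from hash collisions and reprogramming clashes; and the extraction step needs the general (multi-query) forking lemma, so I would have to track both the index guess and the rewinding success probability to obtain a non-negligible DL advantage. The anonymity hybrid is conceptually where DDH (rather than merely DL) is genuinely needed, but it is technically lighter than the forking-based arguments.
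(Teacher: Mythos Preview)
Your high-level decomposition matches the paper's: DDH for anonymity (via a hybrid on the tag $T$), and DLOG through the forking lemma for ring signatures for the remaining three properties, all in the random oracle model. Two differences in the details are worth noting. For unforgeability the paper does not guess a single index; it uses a Coron-style embedding in which each $pk_i$ is set to a known key with probability $\mu=(5/6)^{\mathcal{Q}_C}$ and to $\alpha_i\cdot Y$ otherwise, so corruption queries are answered while the simulator halts with probability at most $1/6$, and after forking the extracted secret is $x=(s_\pi-s_\pi^*)/\bigl(\alpha_\pi'(c_{\pi-1}^*-c_{\pi-1})\bigr)$. For non-frameability the paper does not rerun a forking reduction at all: it simply observes that Phase~I of ${\sf Exp_{LRS\mbox{-}Frame}}$ \emph{is} an instance of ${\sf Exp_{LRS\mbox{-}Unf}}$ (the triplet $(R^*,m^*,\sigma^*)$ verifies, uses only uncorrupted keys, and was not a signing-oracle output), so non-frameability follows immediately from unforgeability.

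There is one point where your sketch would fail as written. In the unforgeability and non-frameability simulations you answer signing queries by ``sampling $s_1,\dots,s_n,c$ and back-patching $H$'', but every simulated signature must also carry a tag $T$, and when the signer is the embedded challenge key this tag must equal $x\cdot L$ and be consistent across all queries on the same ring, even though $x$ is unknown; otherwise the adversary detects the simulation by asking for two signatures from that signer and checking $\textsf{Link}$. Back-patching $H$ alone cannot produce this value. The paper's fix, which you allude to in your preamble but never actually invoke, is to program $\mathit{H2P}$ so that $L=r\cdot G$ for a simulator-known $r$; then $T=sk_\pi\cdot L=r\cdot pk_\pi$ is computable from the public key alone and the simulation is faithful. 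Finally, for linkability, in the paper's definition the adversary itself outputs all $n$ public keys and therefore knows every secret, so there is no ``uncorrupted key'' on which to invoke unforgeability; your pigeonhole intuition is correct, but it stands on the soundness of the underlying OR-proof (forking-based extraction forces $T=sk_j\cdot L$ for some ring member $j$), which is exactly how the paper argues it, rather than on a black-box reduction to the unforgeability game.
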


\subsection{Efficiency (storage)}
\label{sec:storageEfficiency}
Our choice of an ECC linkable ring signature scheme is due to  economy in keys and signatures sizes.
Assuming private keys of 256 bits (32 bytes), Table~\ref{tab:ec_size} summarizes the space required for public keys and signatures compared to the same implementation in a group of prime order (reaching the same security level as with ECC).

We also assume that the list of public keys $PK_n$ does not change once registration ends, so it can be saved once for all the signatures referring to the whole list. For the ECC case, it is useful for smart contract computations to have public keys (that are points in the curve) saved in an uncompressed form, so both coordinates are stored. 
Hence, for $n$ voters, the length of the signature is the tag $T$ length, $n$ times the ring coefficient $s_i$ length, plus the anchor coefficient $c_n$ length. For the ECC case, in bytes, the signature length is $64 + 32(n+1) = 32(n+3)$ (because $T$ is an elliptic curve point, while the coefficients have the same size as the private key); the public key is an elliptic curve point, so its length is $64 \cdot n = 32 \cdot 2n$ bytes. Using cyclic groups over finite fields (i.e., which we call non-ECC case), both $T$ and the public keys belong to the same group as the private keys, so each signature needs  a space of $384 + 384(n+1) = 384(n+2)$ bytes, while storing the ring of public keys requires $384 \cdot n$ bytes.

\begin{table}[h]
    \centering
    \caption{Storage requirements in bytes of Linkable Ring Signatures over Elliptic Curves (256 bit secret key) versus the original proposal from~\cite{DBLP:conf/acisp/LiuWW04} (3072 bit secret key)}
    \label{tab:ec_size}
    \begin{tabular}{|c|c|c|c|c|}
        \hline
        Signers & ECC PKs & ECC Sig & non-ECC PKs & non-ECC Sig \\
        \hline
        10 & 640 & 416 & 3840 & 4608 \\
        \hline
        100 & 6400 & 3296 & 38400 & 39168 \\
        \hline
        1000 & 64000 & 32096 & 384000 & 384768 \\
        \hline
    \end{tabular}
\end{table}
Public key's ring size in the ECC case is roughly one fifth of the non-ECC case, but the most important saving in storage comes for signature sizes: the ECC signature length is less than one tenth of the non-ECC case. As we will see in Section \ref{sec:voting-protocol}, a signature is saved in the blockchain per each ballot, so the saving in space with the ECC approach is huge.

\section{Voting framework}
\label{sec:voting-framework}
This section presents the design of the proposed system. We start by defining the different actors involved in its construction. Then, we will move to describe their interactions. In Appendix~\ref{appendix:voting-diagrams} we provide sequence diagrams that summarize the voting protocol.

\subsection{Actors}
The voting process is divided into phases, in which different actors play specific roles. Each actor is intended to be an independent entity for simplicity, but nothing prevents an actor from being a set of cooperating subjects, leveraging blockchain and smart contracts capabilities or, as an alternative, legal contracts or other constraints. We will mention related scenarios in the subsequent Section~\ref{sec:deployment}.

\paragraph{Voter}
\label{par:voter}
A voter is the entity engaged in the voting process with the aim to express his/her vote. Voters own enough information for being correctly identified (Line~\ref{alg:voter:iddata} Alg.~\ref{alg:voter}), as well as to store and manage secret data linked to their digital identity. It is their own responsibility to make correct computations following the scheme specification in order to generate and submit a valid ballot.

\begin{algorithm}[h]
\small
\caption{Code for Voter}
\label{alg:voter}
\begin{algorithmic}[1]
    \State \textbf{Init:} $IDData \leftarrow$ Voter Identification Data \label{alg:voter:iddata}
    \State \textbf{Init:} $(pk, sk) \leftarrow$ KeyGen()
    \Function{SignUp}{~} \label{alg:voter:signup}
        \State IdentityManager.{\sc SignUp}($IDData$, $pk$) \label{alg:voter:signupCall}
    \EndFunction
    \Function{Vote}{$v$}
        \State $PKS \leftarrow$ IDStorage.Get()
        \State $ballot \leftarrow$ Encrypt($v$, ConfManager.PublicKey) \label{alg:voter:encrypt}
        \State $\sigma \leftarrow$ Sign($sk$, $PKS$, $ballot$) \label{alg:voter:ringSign} \Comment{Computes the ring signature}
        \State BallotBox.{\sc Vote}($\sigma$, $ballot$) \label{alg:voter:send}
    \EndFunction
    \Function{GetResult}{~}
        \State \textbf{return} BallotBox.$Result$
    \EndFunction
\end{algorithmic}
\end{algorithm}
\paragraph{Organizer}
The \textit{Organizer} is the entity that configures and selects parameters characterizing a concrete voting event, i.e., vote's properties and voters' rights. It is responsible for setting up the overall voting session.

\begin{algorithm}[h]
\small
\caption{Code for Organizer}
\label{alg:organizer}
\begin{algorithmic}[1]
    \State \textbf{Init: deploy} ID Storage, Ballot Box, Confidentiality Manage \label{alg:organizer:init}
    \State \textbf{Init: configure} vote open, vote close \label{alg:organizer:configure}
    %\State \textbf{Init: deploy} Ballot Box
    %\State \textbf{Init: deploy} Confidentiality Manager
    \Upon{ConfManager.$SecretKey \neq \emptyset$}
        \State $Ballots \leftarrow$ BallotBox.Get()
        \State $Result \leftarrow$ Decrypt($Ballots$, ConfManager.$SecretKey$)
        \State BallotBox.{\sc SetResult}($Result$) \label{alg:organizer:result}
    \EndUpon
\end{algorithmic}
\end{algorithm}

\paragraph{Identity Manager}
Identification is taken care of by the so-called \textit{Identity Manager}. Its main task is to identify actors involved in the voting process and to register them (Line~\ref{alg:idManager:verify} and \ref{alg:idManager:storageSignup}  Alg. \ref{alg:idManager}) in the platform according to the policies defined by the \textit{Organizer}. The $label$ derived from the $IDData$ works as identifier for the voter and can be useful to match the list of registered voters to some external public one.

\begin{algorithm}[h!]
\small
\caption{Code for Identity Manager}
\label{alg:idManager}
\begin{algorithmic}[1]
    \Function{SignUp}{$\mathit{IDData}$, $pk$}
        \If{VerifyIdentity($\mathit{IDData}$)} \label{alg:idManager:verify}
            \State $label \leftarrow$ LabelFromIDData($\mathit{IDData}$)
            \State \textbf{call} IDStorage.{\sc SignUp}($pk$, $label$) \label{alg:idManager:storageSignup}
        \EndIf
    \EndFunction
\end{algorithmic}
\end{algorithm}

\paragraph{Confidentiality Manager}
When required, a \textit{Confidentiality Manager} is responsible to guard and distribute information necessary to encrypt and decrypt votes. This is a sensitive component, and Section~\ref{sec:confidentiality} goes in depth about  possible solutions that guarantee confidentiality with and without a \textit{Confidentiality Manager}.
\begin{algorithm}[h]
\small
\caption{Smart Contract for Confidentiality Manager}
\label{alg:confidentialityManager}
\begin{algorithmic}[1]
    \State \textbf{Init:} $\mathit{PublicKey} \leftarrow DistrPKGeneration()$ \label{alg:confidentialityManager:pkGen}
    \State \textbf{Init:} $\mathit{SecretKey} \leftarrow \emptyset$
    \Upon{vote close} \Comment{event fired/enabled by the blockchain}
        \State $\mathit{SecretKey} \leftarrow DistrSKGeneration()$ \label{alg:confidentialityManager:skGen}
    \EndUpon
\end{algorithmic}
\end{algorithm}

\subsection{Voting Protocol}
\label{sec:voting-protocol}
To describe the voting process we introduce two more components:
\begin{itemize}
    \item the \emph{ID Storage} (see Alg.~\ref{alg:idStorage}) is the bulletin board of eligible voters.
    \item the \emph{Ballot Box} (see Alg.~\ref{alg:ballotBox}) is the storage of valid ballots.
\end{itemize}
\begin{algorithm}[h]
\small
\caption{Smart Contract for ID Storage.}
\label{alg:idStorage}
\begin{algorithmic}[1]
    \State \textbf{Init:} $PublicKeys \leftarrow \emptyset$
    \Function{SignUp}{pk, label} \Comment{Only by the Identity Manager}
    \If{$\neg$ BallotBox.$active$}
        \State $PublicKeys \leftarrow PublicKeys || {<pk, label>}$ \label{alg:idStorage:store}
    \EndIf
    \EndFunction
\end{algorithmic}
\end{algorithm}

\begin{algorithm}[h]
\small
\caption{Smart Contract for Ballot Box. $p$ is the total number of possible choices for a ballot.}
\label{alg:ballotBox}
\begin{algorithmic}[1]
    \State \textbf{Init:} $choices \leftarrow \{v_0, ..., v_p\}$ \label{alg:ballotBox:encoding}
    \State \textbf{Init:} $active \leftarrow false$
    \State \textbf{Init:} $Ballots \leftarrow \emptyset$; $Tags \leftarrow \emptyset$; $Result \leftarrow \emptyset$
    \Function{Vote}{$\sigma, b$}
        \If{$\sigma.T \notin Tags \land active$} \Comment{The scheme Link function}
            \If{Verify(BallotBox.$PublicKeys, b, \sigma)$}
                \State $Ballots \leftarrow Ballots \cup {b}$
            \EndIf
        \EndIf
    \EndFunction
    \Upon{vote open} \Comment{event fired/enabled by the blockchain}
        \State $active \leftarrow true$ \label{alg:ballotBox:enable}
    \EndUpon
    \Upon{vote close} \Comment{event fired/enabled by the blockchain}
        \State $active \leftarrow false$
    \EndUpon
    \Function{SetResult}{$R$} \Comment{Only by Organizer}
        \State $Result \leftarrow R$
    \EndFunction
\end{algorithmic}
\end{algorithm}

With the above ingredients, the framework phases are described as follows: 

\subsubsection{Setup}
The \textit{Organizer} deploys and configures on the blockchain the three smart contracts that hold the public information for the voting session: \textit{ID Storage}, \textit{Ballot Box}, \textit{Confidentiality Manager} (Line~\ref{alg:organizer:init} of Alg.~\ref{alg:organizer}). The \textit{Organizer} configures, in this phase, the firing for the \textit{vote opening} and \textit{vote closing} events. Moreover, the \textit{Organizer} declares the tolerance to failures for the \textit{Identity Manager} and the \textit{Confidentiality Manager} if they adopt any threshold mechanism ($K$ out of $N$ discussed in Sections~\ref{sec:identification} and \ref{sec:confidentiality}).

\subsubsection{Registration}
Voters are required to sign up (Line~\ref{alg:voter:signup} Alg.~\ref{alg:voter}) to the system by providing (a) information that allows to prove their identity and (b) the public key they are going to use for the vote (Line~\ref{alg:voter:signupCall} Alg.~\ref{alg:voter}). This is a simplification of a typical certificate issuing procedure, since the outcome is not a signed certificate but a registration on the ID Storage contract.
Only voters who sign up strictly before the vote opening will be allowed to vote.

\subsubsection{Vote Opening}
The \textit{Ballot Box} is set to active (Line~\ref{alg:ballotBox:enable} of Alg.~\ref{alg:ballotBox}) so it begins to accept ballots and the \textit{ID Storage} stops accepting new public keys. The vote opening/closing event reported in Algorithms~\ref{alg:organizer},~\ref{alg:idStorage}~and~\ref{alg:ballotBox} is pointed as an external condition, since it is enforced by leveraging the blockchain and smart contract property of referring to the constant-time block issuance process. In this way, operations are constrained in respect to time. Anyway, if this property is not available in the chosen system deployment, the \textit{Organizer} is in charge of firing those events by writing them in the blockchain.

\subsubsection{Voting}
Eligible voters interact with the platform in order to get the information required for voting. In order to submit a valid ballot, voters: (1) fetch identities needed for the linkable ring signature from the \emph{ID Storage} (Line~\ref{alg:voter:iddata} Alg.~\ref{alg:voter}), (2) use the format/encoding defined in the \emph{Ballot Box} to create a ballot, encrypt it by fetching information from the \textit{Confidentiality Manager}~(Line~\ref{alg:voter:encrypt} Alg.~\ref{alg:voter}), (3) compute the linkable ring signature (Line~\ref{alg:voter:ringSign} Alg.~\ref{alg:voter}) and finally (4) anonymously send their vote to the system (Line~\ref{alg:voter:send} Alg.~\ref{alg:voter}).
\subsubsection{Vote Closing}
The same considerations made for the Vote Opening hold for the Closing event.

\subsubsection{Tally}
Once the Voting Phase has been closed, the \textit{Confidentiality Manager} is in charge of releasing the Secret Key paired with the Public Key published for ballot encryption (Line~\ref{alg:confidentialityManager:skGen} Alg.~\ref{alg:confidentialityManager}). Once done, anyone, thanks to the information publicly available on the blockchain, is able to decrypt ballots and compute the final result.

\section{Protocol details and design choices}
\label{sec:deployment}
This section gives details on how our framework achieves  \textit{Identification}, \textit{Confidentiality}, \textit{Anonymity} and \textit{Scalability} in different application scenarios.

\subsection{Identification}
\label{sec:identification}
Identification is a vital step in the vote startup. There are many known and well assessed procedures employed nowadays for mapping real identities to digital ones, like adopting multiple factors of authentication and/or relying on certificate authorities. We deliberately wrap any of these procedures in the $IDData$ variable (Line~\ref{alg:voter:iddata} Alg.~\ref{alg:voter}). The \textit{Identity Manager} is thus responsible for mapping the chosen identification procedures to our system. Since achieving resilience to Byzantine failures (unwanted or malicious failures and deliberate misbehavior) is one of our major goals, it is recommended to distribute the identification responsibility among a set of entities. That is,  multiple entities should contribute to the \textit{Identity Manager}'s operation and  can leverage the smart contract capabilities in order to cooperate (e.g., use a $K$ out of $N$ threshold mechanism).

Moreover, to let failures be as independent as possible, the \textit{Identity Manager} should implement different voters' identification procedures (that is, in our construction, implement different VerifyIdentity($IDData$) functions in Line~\ref{alg:idManager:verify} of Alg.~\ref{alg:idManager}).

\subsection{Confidentiality}
\label{sec:confidentiality}

Our system guarantees secrecy to voters and avoids unwanted early result disclosures. To this aim, there are several concrete solutions that may be implemented. Indeed, a \textit{Ballot box} built on the blockchain can be compared to a transparent box with unfolded paper inside and anyone being able to take note of added votes. The vote result is known step by step during the Voting Phase and, in most use cases, this is not a desired possibility because real time results may influence the voters who haven't voted yet. That is why a mean to obfuscate votes contents during the Voting Phase is required.

\subsubsection{Ballot encoding and encryption}
\label{sec:encodingAndEncryption}

Each participating entity can acquire information about choices encoding from the \textit{Ballot Box} (Line~\ref{alg:ballotBox:encoding} Alg.~\ref{alg:ballotBox}). Then, we rely on the fact that the Encrypt function in Line~\ref{alg:voter:encrypt} of Alg.~\ref{alg:voter} is always able to provide a different ciphertext at each execution even if encrypting the same plaintext. If encryption were deterministic, this is trivially achieved by embedding a random salt at each encryption (non-deterministic schemes, like RSA-OAEP~\cite{RSAOEP}, will do the job without modifications).

\subsubsection{Confidentiality Manager}
\label{sec:confidentiality-manager}

The \textit{Confidentiality Manager} is the component in charge of providing the public key used to encrypt ballots during the Voting Phase. From the trustworthiness perspective, the \textit{Confidentiality Manager} has the possibility to completely invalidate a voting session because the tally directly depends on the secret key disclosure. Hence, for an attacker, it would be sufficient to compromise the \textit{Confidentiality Manager} to void the result of the voting session. For this reason, the responsibility of generating a key pair at the correct time is shared among a set of cooperating entities. Properly designed for use cases like ours, the Ethereum Distributed Key Generation (ETHDKG) \cite{DBLP:journals/iacr/SchindlerJSW19} allows different entities to generate and share a key pair with a threshold mechanism of $K$ out of $N$ for private key disclosure. It exploits Shamir's Secret Sharing \cite{DBLP:journals/cacm/Shamir79} and Verifiable Secret Sharing \cite{DBLP:conf/focs/Feldman87}, and relies on the blockchain as an auditable communication mean. In particular the whole ETHDKG protocol is executed before vote opening, resulting in the master public key for ballot encryption (Line~\ref{alg:confidentialityManager:pkGen} Alg.~\ref{alg:confidentialityManager}). As soon as the vote is closed, the individual secret key shares are  revealed in order to execute the Tally Phase (Line~\ref{alg:confidentialityManager:skGen} Alg.~\ref{alg:confidentialityManager}). If $K$ is the number of shares required to reconstruct the secret key and $N$ the number of \textit{Confidentiality Manager} members, the tolerance to malicious behavior is $min(K-1,N-K)$; in fact the system correctness can be undermined in two cases: (1) the secret is disclosed before the Tally Phase, so $K$ entities disclosed earlier; or (2) the secret is never disclosed because $N-K+1$ entities do not correctly participate in the secret reconstruction. Since both cases must be avoided, tolerance to failing entities is the minimum between the two.

\subsubsection{Vote Claim}
\label{sec:vote-claim}

As an alternative to the distributed key generation, a two step procedure is possible. (\textit{Submit Step}) The voters do not encrypt their ballots and they do not submit the ballot content in the voting phase. Instead, they (1) salt the ballot content, (2) compute a hash of it, (3) ring-sign the hash string, and (4) they send to the \textit{Ballot Box} the ring signatures along with the hash string. If the signature is valid, then both, signature and hash string, are immutably stored in the blockchain and, since ballots are salted, there is no feasible way for a computationally bounded adversary to invert the hash and derive the original ballot. When the tally begins (\textit{Redeem Step}), voters can

submit their ballots to the smart contract which will accept only those with a matching hash stored in the previous step.

Although this seems to solve the confidentiality problem there are some important drawbacks to take into account:
\begin{itemize}
    \item The voters must interact one more time with the platform to claim their vote in the Tally Phase, and this can be an issue for effective participation with a large number of voters (i.e., if a ``participation threshold'' exists, it should be reached not only during the Voting Phase but also during the tally interaction).
    \item The voters need to store their ballot somewhere, waiting for the Voting Phase to finish. The device or system they rely on should be kept secure during this time.
    \item During the Tally Phase results are collected in a time window, so a voter could decide to submit or not his/her ballot depending on the ongoing result. In some cases, this would void the neutrality of the system.
\end{itemize}
If the voting use case does not need to be resilient to these issues, this is an effective solution to guarantee confidentiality.

\subsection{Anonymity}
\label{sec:anonymity}

Ring signatures grant anonymity, providing at the same time the required authenticity guarantees. This solves the problem of recognizing a ballot as valid without disclosing the underlying voter identity, thanks to the fact that a ring signature obfuscates the link between a signature and the related signer.

On the blockchain interaction side, in Line~\ref{alg:voter:send} of Alg.~\ref{alg:voter} the voter directly calls a method of a smart contract that causes a state change in the blockchain; unfortunately this is not usually possible for several reasons. First, a voter may not have enough storage or computation resources to run a full or light blockchain client. This can be easily overcome by sending transactions to a set of nodes acting as mediators for the submission of end-users' transactions. A second, more relevant, issue is that a blockchain user must sign his/her own transaction with a valid blockchain identity. The word ``valid" can have, in this case, two different meanings: in a \emph{permissioned} blockchain it means that the identity is authorized to submit transactions, while in a \emph{permissionless} one it means that the identity has enough resources to create an acceptable transaction. Neither condition pairs well with an anonymous and scalable ballot submission. The former would mean to identify the actual user behind the identity, so voiding the linkable ring signature property. The latter would imply the voter to procure by himself/herself the needed resources, which is usually not scalable on a large non-technical voter base. Moreover, the resource (cryptocurrency) procurement should preserve anonymity. We present three ways to solve this problem.

\subsubsection{Anonymization Proxy}
\label{sec:anon-proxy}

The anonymization proxy is a component that acts as a proxy between voters and the blockchain. In practice it is responsible of taking valid votes, packing them in transactions, and submitting them to the blockchain. Multiple authorized anonymization proxies can be provided to avoid a single point of failure and trust. Voters can autonomously choose which (one or more) they prefer and they can send the ballot without being identified. If they require advanced anonymity guarantees, they can leverage any anonymization technique for ordinary web interactions (e.g. Onion routing). After the submission, they can easily check that their operations have been effectively executed on the blockchain, thanks to its transparency. On the \textit{Organizer}'s side this has advantages, since there is no need to ensure write access to the blockchain to each voter, but only to the authorized proxies, who can also pack ballots in less transactions and optimize execution costs.

\subsubsection{Application Specific Blockchain}

An application specific blockchain can be realised with an ad-hoc designed execution environment. This means that the application logic of the voting system can be directly embedded in the transactions semantics. Mapping to our case, it is possible to validate the transactions of ballots' submissions directly through the attached linkable ring signature. The Tendermint SDK \cite{buchman2016tendermint} is an example framework on which such implementation could be based.

\subsubsection{Vote Token}
\label{sec:vote-token}

There can be some scenarios in which requiring a physical interaction is not a problem (i.e., a plenary meeting of association members). In this case it is possible to think of some vote token to be randomly pulled out of a box by each voter, who can then use it to claim the resources (cryptocurrency) necessary to submit a vote to the system, using a throwaway address generated on the fly. Furthermore, tokens can be accumulated for future votes.

\subsection{Scalability}
\label{sec:scalability}
The voting system should easily scale in the presence of a large base of actors. In particular, the number of voters is expected to be orders of magnitude higher than the number of all the other participating entities.

Since scalability currently is a key issue in blockchain-based systems,  we take special care dealing with it and furthermore compare our design with similar proposals. Our metric will be in terms of required communication rounds ($c$), and read ($r$) and write ($w$) operations on the blockchain for each Voting Phase. When several information pieces are available to be written/read, they are assumed to be accessed in a single operation. Among all possible deployment choices, we will consider a setup similar to the two related works based on blockchain \cite{DBLP:conf/trustcom/LyuJWNAF19, DBLP:journals/iacr/LiuW17}: we assume to have $M$ voters and to impose $K_i$ on $N_i$ threshold mechanisms for voters identification and $K_e$ on $N_e$ secret sharing threshold for ballot encryption. Table~\ref{tab:scalability} shows a comparison with these two proposals \cite{DBLP:journals/iacr/LiuW17, DBLP:conf/trustcom/LyuJWNAF19}. The former is based on blind signatures and the latter, like ours, on linkable ring signatures.

In the Registration Phase, voters must successfully identify themselves with at least $K_i$ different \textit{identity managers}; so, a voter has at least a communication round with each one of them. Unlike \cite{DBLP:journals/iacr/LiuW17}, this process can be reused indefinitely for any subsequent voting session, even if not initially scheduled. Any subsequent vote can directly start its protocol considering this phase as already completed. This is really efficient for recurring votes with a great number of voters in common.

Then, voters read the ring of public keys and the encryption key from the blockchain, and they submit their ballot along with the ring signature; this translates into $1$ read and $1$ write in the blockchain per voter. Providing confidentiality through the ETHDKG \cite{DBLP:journals/iacr/SchindlerJSW19} protocol requires 2 rounds of interactions by the $N_e$ \textit{Confidentiality Managers}, in which each of them has to write to the blockchain in both rounds (\textit{Encryption}) in order to publish the \textit{master public key}. In the Tally Phase it is enough that $K_e$ \textit{Confidentiality managers} disclose their secrets to decrypt ballots.

\begin{table}
\caption{Summary table of reads $r$ and writes $w$ on the blockchain globally required by the system in each phase in $c$ communication rounds. Note that $M, K_m$ are usually much greater than $N_e, K_e$}
    \label{tab:scalability}
    \resizebox{\columnwidth}{!}{
        \begin{tabular}{|c|c|c|c|c|c|c|c|c|c|}
        \hline
        Phase &
      \multicolumn{3}{c|}{Our Framework} &
      \multicolumn{3}{c|}{Blind Sig. \cite{DBLP:journals/iacr/LiuW17}} &
      \multicolumn{3}{c|}{Alt. LRS \cite{DBLP:conf/trustcom/LyuJWNAF19}} \\
    & w & r & c & w & r & c & w & r & c \\
    \hline
    Setup & $1$ & - & $1$ & $1$ & - & $1$ & $1$ & - & $1$ \\
    \hline
    Registration & $MK_i$ & - & $K_i$ & \multicolumn{3}{c|}{Not required} & \multicolumn{3}{c|}{Not provided} \\
    \hline
    Encryption & $2N_e$ & $N_e$ & $2$ & \multicolumn{3}{c|}{Not provided} & $2M$ & $M$ & $2$ \\
    \hline
    Voting & $M$ & $M$ & $1$ & $2MK_i$ & $2MK_i$ & $2K_i$ & $M$ & $M$ & $1$ \\
    \hline
    Tally & $K_e$ & - & $1$ & - & $1$ & $1$ & $K_m$ & - & $1$ \\
    \hline
  \end{tabular}
}

\end{table}

For the \textit{vote claim} solution of Section~\ref{sec:confidentiality}, \textit{Confidentiality Managers} are not required, and ballot encryption/decryption is completely eliminated at the cost of a second communication round for the voter in the Tally Phase. In that case, Tally Phase cost becomes $M$ writes and no distributed key generation protocol is required, avoiding the \textit{Encryption} and \textit{Decryption} costs.

\section{Properties and additional features}
\label{sec:requirements}

We now  discuss relevant additional properties achieved by our design, which include and extend those reported in~\cite{DBLP:journals/iacr/LiuW17}. For each one of them we provide an informal definition and discuss why we are able to fulfill it.

\subsubsection{Required properties}
Required properties are those that must be granted for any use case of the e-voting system, as they encompass the minimal correctness and security prerequisites needed for any application scenario:

\paragraph{Verifiability}
The \textit{ID Storage} (Alg.~\ref{alg:idStorage}) and the \textit{Ballot Box} are implemented by smart contracts, leveraging the blockchain capabilities. That is, each method invocation is permanently stored in the system. All ballots remain saved and publicly visible along with their attached linkable ring signature. So, a voter, who knows which his/her own ballot is, can verify the correctness of both the ballot submission and the tally (Individual Verifiability). From the global system point of view (Universal Verifiability), every entity can read the \textit{ID Storage} and verify the linkable ring signature, checking that all saved ballots are legitimate. Finally, the saved result (Line~\ref{alg:organizer:result} Alg.~\ref{alg:organizer}) can be verified through the secret key published by the \textit{Confidentiality Manager}. The \textit{Organizer} saves back the result to the \textit{Ballot Box} only as a convenience, since the secret key for decryption is already public at that moment.

\paragraph{Legitimacy}
Only those who have the right to vote can take part in the voting process, and only for as long as they're granted to participate. The satisfaction of this property is directly tied to the \textit{Identity Manager} correctness. That's why a $K$ out of $N$ mechanism is recommended. Each information in the \textit{ID Storage} is provided by the (or $K$ out of $N$) \textit{Identity Manager} through signed transaction submitted to the blockchain. The label, saved along with the public key (Line~\ref{alg:idStorage:store}, Alg.~\ref{alg:idStorage}), acts as a public identifier for each voter in the list, so anyone who knows the list of allowed voters, can check it against the registered one (e.g., in our deployment in Section~\ref{sec:deployment}, we set the label as the SHA256 of the voter's email). Beyond this, the linkable ring signature, paired with the deterministic execution environment of the blockchain, ensures that only identities registered in the \textit{ID Storage} can create and successfully submit valid ballots. In fact, signature verification is implemented in the \textit{Ballot Box} smart contract, so double votes are detected through the linkability and non-frameability properties.

\paragraph{Completeness} %\mb{rephrased}
All ballots are correctly interpreted and vote accounting is accurate. In our design, when ballot secrecy is not required (but only anonymity is), then the Tally Phase consists of a simple count function over publicly available data. The same result is achieved with the \textit{vote claim} solution in Par.~\ref{sec:vote-claim} after the \textit{redeem} step. Note that in this case there is the possibility that included fingerprints (stored along with the ring signatures) are not redeemed, resulting in uncounted ballots. Depending on the use case, this is acceptable or not.

When using \textit{Confidentiality Manager}s that cooperate by leveraging ETHDKG \cite{DBLP:journals/iacr/SchindlerJSW19}, completeness is guaranteed if the distributed key generation algorithm is correctly followed. Tolerance to failures is analysed in Section~\ref{sec:confidentiality-manager}.

\paragraph{Neutrality}
The voting  outcome cannot be influenced neither by the voting process itself nor by the tools used to vote. In the Confidentiality Section~\ref{sec:confidentiality}, we explain how our design allows for the vote result to be hidden during a voting session, in order to avoid influencing voters with already accepted ballots.

\paragraph{Auditability}
The whole voting procedure must be auditable during and after the vote. In our case, this feature is provided by the blockchain, which gives the abstraction of an append-only register that allows auditing all methods' invocations on the components implemented through smart contracts (\textit{ID Storage}, \textit{Ballot Box}, \textit{Confidentiality Manager}).

\paragraph{Consistency}
The voting result is the same for all correct actors and stakeholders. Correct voters submit ballots using the encoding rules defined in the \textit{Ballot Box} (Line~\ref{alg:ballotBox:encoding} Alg.~\ref{alg:ballotBox}) smart contract. From the system point of view, the deterministic execution layer of the blockchain, paired with its consensus algorithm, ensures that all nodes execute transactions in the exact same order, hence constructing exactly the same system state for all involved smart contracts.

\subsubsection{Information Security}
Standard security properties, in general referred as the CIA triad \cite{ISO27000}: confidentiality, integrity, and availability. We will refer to data confidentiality, data integrity, system integrity, and availability as defined in \cite{RFC4949} and briefly comment on how our framework has been designed to achieve them:

\paragraph{Data Confidentiality}
Data  should not be disclosed to system entities unless they have been authorized to access it. In Section~\ref{sec:confidentiality} we discussed about confidentiality of ballots as the possibility to not disclose them during the Voting Phase. In respect to voters we also analysed a solution for their anonymity in Section~\ref{sec:anonymity} beyond the mere theoretical usage of the linkable ring signature.

\paragraph{Data Integrity}
Data  should not be changed, destroyed, or lost in an unauthorized or accidental manner. As long as the Byzantine fault tolerant consensus algorithm of the blockchain remains valid, all the constraints coded in the e-voting session smart contract are respected and no one can tamper with them. Linkable ring signature verification is coded inside the Ballot box smart contract.
\paragraph{System Integrity}
The system should perform its intended function in a unimpaired manner, free from deliberate or inadvertent unauthorized manipulation. Note that each change in  system state happens through signed transactions that execute deterministic pieces of code. So any modification is authenticated and propagates the same changes to all nodes.
\paragraph{Availability}
The system should be accessible, or usable or operational upon demand, by an authorized system entity, according to its performance specifications. Due to the distributed nature of the blockchain, the system remains available as long as the underlying consensus algorithm is correctly run by a subset of nodes; the minimum number of nodes in the subset is defined by the blockchain.

Some issues with availability can arise when allowing ballot replacement as a coercion countermeasure (see below). In fact, since each legitimate signer would be allowed to submit as many ballots as he/she wants, then all submitted ballots should be accepted. The problem resides in the fact that creating fake linkable ring signatures (by using random coefficients) is much more efficient than verifying them. For this reason an attacker could undermine the system's availability by flooding it with many fake signatures. Anyway our approach based on the \textit{anonymization proxy} (Section~\ref{sec:anon-proxy}) can apply known Denial-of-Service countermeasures for protecting endpoints while the \textit{vote token} approach (Section~\ref{sec:vote-token}), involving a physical action for each submission, is a natural mitigation.

\subsubsection{Additional Features}
The following features can either be required or not by different kinds of voting systems. Still, it is desirable that a voting system is able to provide them.

\paragraph{Archive}
Vote results should be stored with proved integrity. Each blockchain node holds a complete copy of data and history of operation on those data; thus integrity is guaranteed by the blockchain.

\paragraph{Coercion resistance}
The system should protect voters from being deprived of their rights. In particular, it should prevent:
\begin{itemize}
    \item Simulation: it should not be possible to impersonate a legitimate voter;
    \item Forced absence: voters may not be denied to vote;
    \item Coerced vote: voters may not be forced to vote in a predefined way.
\end{itemize}

For our design, we have that:
\begin{itemize}
    \item Simulation: for this, an attacker should either steal the private key of a voter after the vote phase started or make the \textit{Identity Manager} replace the key from the right voter with his/her own public key. In the first case we assume that a voter is able to correctly store his/her secret information; hardware tokens or trusted execution environments can be useful for this purpose, too. For the second case, a voter can detect that his/her public key is wrong by directly checking on the blockchain, so he/she can rerun the identification process. Moreover, several \textit{Identity Managers} can be present, each one possibly owning a different identification process, making it even harder for an attacker to succeed. 
    \item Forced absence: access to the blockchain is assumed ubiquitous. Moreover, voting steps can be automated using scripts; thus users may prepare votes and set up thier submission for a specific (later) time. Anonymity of the linkable ring signature ensures a voter that no one can link back to the ballot he/she submitted. Finally, non-framability does not allow any adversary, even if formed by a collusion of other legitimate signers, to deny vote of a honest user by forging his/her link tag $T$.
    \item Coerced vote: voters may be enabled to submit valid ballots an undefined number of times, as linkability tags allow for enforcing subsequent ballots to overwrite previous ones. Moreover voters may schedule some automatic vote submission right before the Voting Phase deadline. Anyway, we consider this workaround only a mitigation, since this legitimate double ballot submission can be easily detected by a coercer, given the auditability of the blockchain system, if he/she was able to take note of the linkability tag submitted by the coerced voter.
\end{itemize}

\section{Implementation}\label{sec:proof-of-concept}
We validate the  feasibility of our proposal with a working implementation. Ring signatures and the blockchain are points of paramount importance, so our proof of concept implementation mainly focuses on them.

\subsection{Smart Contract}
The most natural choice for rapidly prototyping a working concept based on a blockchain is to rely on the Ethereum ecosystem of software.
The \textit{ID Storage} and the \textit{Ballot Box} are implemented in a single smart contract that stores the voting session state, voters information, and ballot data. The smart contract also implements linkable ring signature verification, so only ballots with valid signatures are actually stored.
Signature algorithms reside in a separate library so they can be reused across different deployments of the system. The \textit{Organizer}, \textit{Confidentiality Managers} and \textit{Identity Managers} appear with their Ethereum addresses, they can hence be either individuals or other smart contracts like in the case of \cite{DBLP:journals/iacr/SchindlerJSW19}.

\subsection{Signature Algorithm}

The signature scheme described in Section~\ref{sec:lsag-schema} has been implemented in Solidity within the Ballot Box leveraging the BN256 curve since the Ethereum Virtual Machine has opcodes for the sum and multiplication on it. \cite{eip1108}.\footnote{With an application-specific blockchain there would not have been any constraint in reference to this, indeed the execution layer may be customized as needed.}).

Moreover, in our schema, the public keys hash gets computed cumulatively at at each new added public key. This particularly suits the blockchain environment since the computation is spread across multiple transactions.

\subsection{Proof of concept}
We implemented the whole voting process by providing scripts to accomplish each phase we described for each involved actor. Confidentiality is provided through the ETHDKG \cite{DBLP:journals/iacr/SchindlerJSW19} smart contract. The codebase \cite{chitoronia-poc} is in Python, except for the smart contracts that are written in Solidity, the smart contract oriented programming language of the Ethereum blockchain.

The scripts can target any Web3 (the Ethereum standard RPC interface) compliant node. We also provide a ready-to-go testnet through a Docker container.

\subsection{Real deployment}
Our voting framework has been successfully deployed and used in the elections of university officers. The three distinct elections involved 428 total voters with a participation of 408 voters in the registration phase, and 398 in the voting phase. Although we are going to produce and publish a full report of this first use of the system (and of the subsequent already scheduled ones), we think it is worth citing here the main characteristics of the event:
\begin{itemize}
    \item The overall architecture was made of a blockchain, a blockchain explorer, a web server acting as \textit{Identity Manager}, a web server acting as \textit{anonymization proxy} for vote submission, an administration portal and a web application for voters to interact with the system.
    \item The signature algorithm has been implemented in plain Javascript to compute the signature in a common browser.
    \item Voters generated a \textit{Voting Card} locally in their browser. The \textit{Voting Card} included a keypair, the voter email and a signature of the email, as a string, using the generated keypair. The secret key was encrypted with a password and the public key was sent to the \textit{Identity Manager}. The identity was verified by integrating our system with the university ones through SAML 2.0. Voters then downloaded the \textit{Voting Card} from the browser in order to store it on their device.
    \item We deployed a permissioned blockchain with Hyperledger Besu that is fully Ethereum compliant. We ran 4 nodes with the Istanbul Byzantine Fault Tolerant consensus algorithm \cite{ibft} (a tweak of \cite{DBLP:conf/osdi/CastroL99}). Each node was deployed and administered by a distinct entity.
    \item During the voting phase, voters selected their chosen candidate, uploaded the \textit{Voting Card} to the browser and inserted the password to decrypt the secret key. The browser computed the linkable ring signature and encrypted the ballot choice, and sent both to the \textit{anonymization proxy} that redirected the signature and ballot to the blockchain. The transaction hash and a link to the blockchain explorer was given back to the voter as receipt.
    \item We did not integrate the Ethereum DKG\cite{DBLP:journals/iacr/SchindlerJSW19}, yet, so the administration portal generated an RSA key pair and published the public key to the \textit{Ballot Box}. Once the voting phase was over, the members of the electoral committee requested ballots decryption on the administration portal. Decryption was computed as soon as all the members of the committee asked the decryption with their personal accounts.
\end{itemize}

\section{Conclusion}\label{sec:conclusion}

The presented framework leverages the blockchain and linkable ring signature capabilities to create a comprehensive solution for e-voting in a wide range of scenarios. We show how our design can address the requirements of a voting system and provide a collection of additional features. In particular our proposal fits voting scenarios where coercion resistance and receipt freeness are not a key concern. Indeed, our system cannot be considered receipt-free because of its individual verifiability property. Moreover, even though simulation and forced absence are to some extent prevented, coercions are not fully remedied, since a fully remote workflow can't prevent a voter to be physically conditioned.

Our plan is to integrate the Ethereum DKG to guarantee confidentiality and neutrality to the platform, and to make a step further in the scalability of the framework, by accomodating the system to support a larger number of voters (thousands). This, in particular, will allow us to tune up deployment choices for different election scenarios.

\section{Acknowledgement}
We are grateful to Javier Herranz for fruitful discussions and pointers to relevant literature on ring signature schemes.
This work is partially supported by the NATO Science for Peace and Security program (grant G5448), by the Spanish Ministerio de Ciencia e Innovación (MICINN) grant PID2019-109379RB-I00, and by the Regional Government of Madrid (CM) grant EdgeData-CM (P2018/TCS4499) cofounded by FSE \& FEDER. This paper is part of the R\&D\&I project PID2019-109805RB-I00 funded by MCIN/AEI/10.13039/501100011033 (ECID).

\printbibliography

\appendix

\subsection{Voting workflow diagrams}
\label{appendix:voting-diagrams}
We report the sequence diagrams that summarize the voting protocol following the algorithms in Section~\ref{sec:voting-protocol} in order to give an overall picture of the interactions among the actors of the framework.
\begin{figure}
\centering
\includegraphics[width=6.5cm]{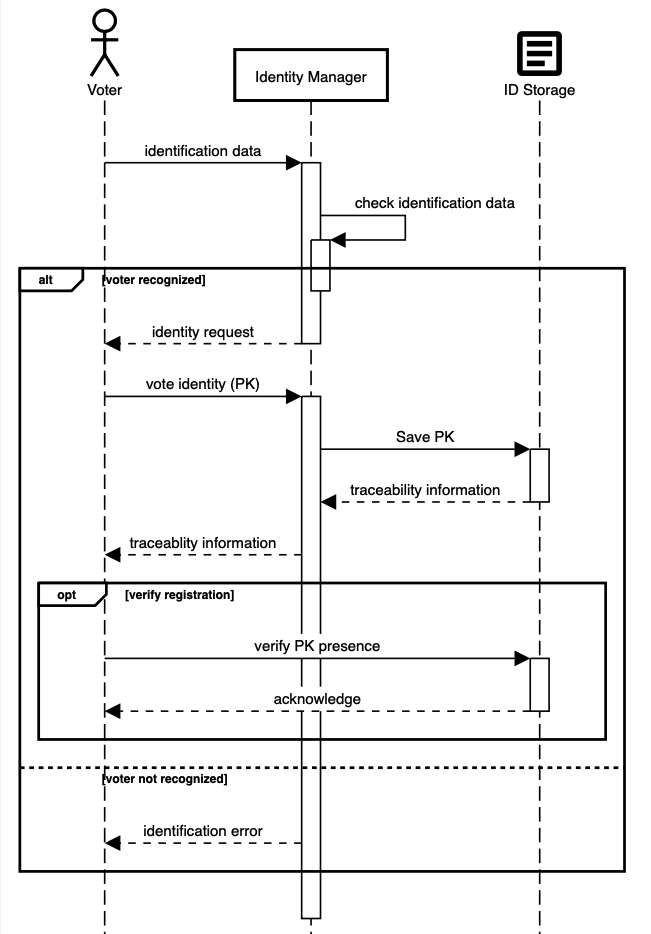}
\caption{Sequence diagram of the registration phase}
\label{seq:registrazione}
\end{figure}

\begin{figure}
\centering
\includegraphics[width=6.5cm]{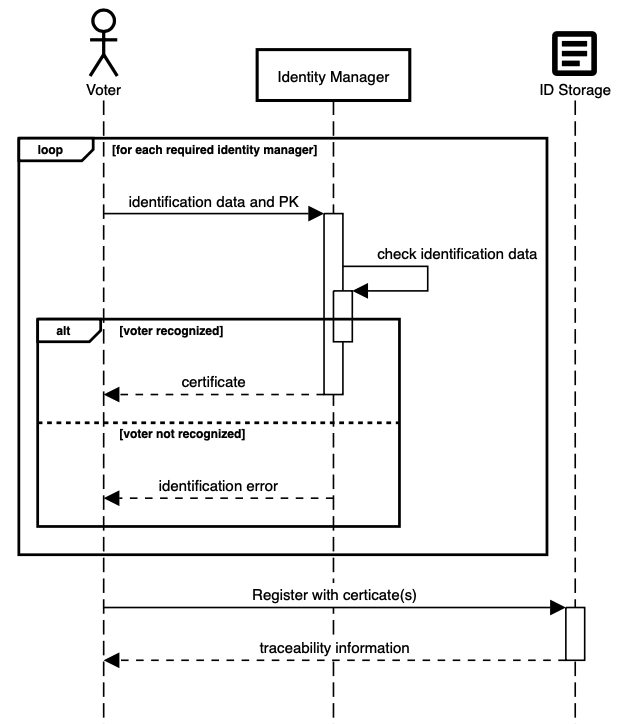}
\caption{Sequence diagram of the registration phase with multiple identity managers and using certificates.}
\label{seq:registrazione-opt}
\end{figure}

\begin{figure}
\centering
\includegraphics[width=6.5cm]{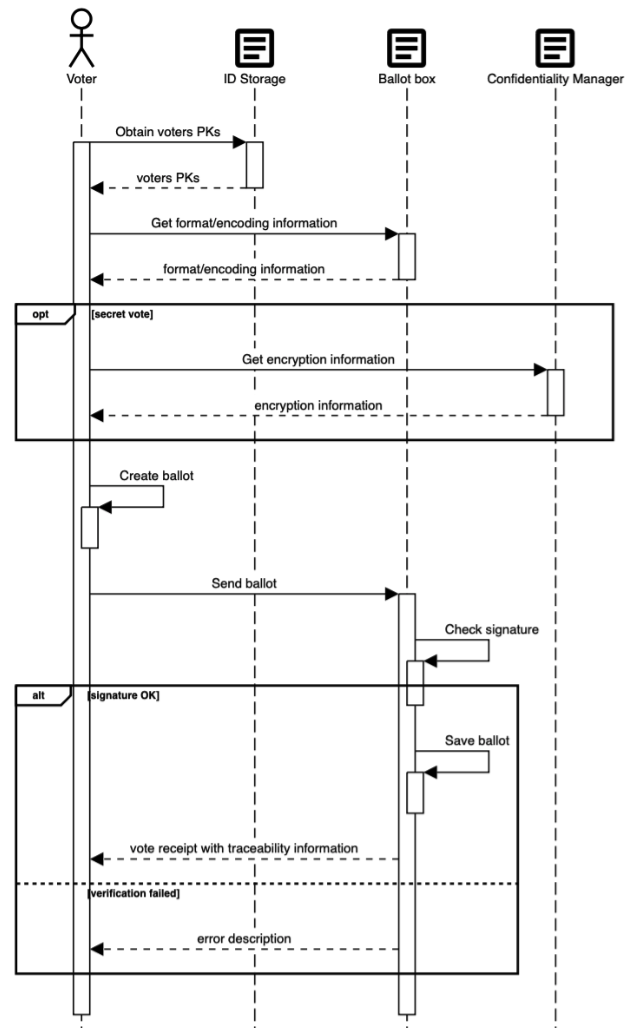}
\caption{Sequence diagram of the voting phase}
\label{seq:voto}
\end{figure}

\begin{figure}
\centering
\includegraphics[width=6.5cm]{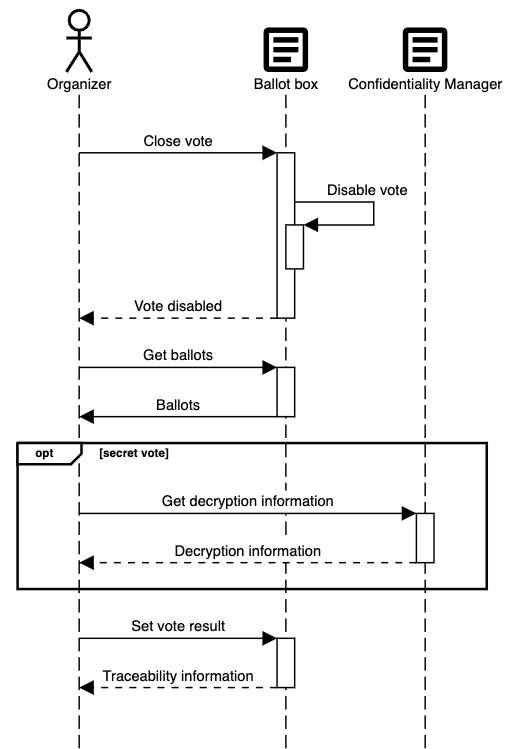}
\caption{Sequence diagram of the tally phase. The ``Close vote" message can be substituted by a blockchain automatic lock, coded in the ``Ballot box" smart contract.}
\label{seq:scrutinio}
\end{figure}

\subsection{Security Definitions for Linkable Ring Signature Schemes}

\label{appendix:ec-lsag}

We  start by providing formal definitions of the main security notions for linkable ring signatures as defined in Section~\ref{sec:RingSignature}, essentially following (often verbatimely)~\cite{DBLP:conf/eurocrypt/0001DHKS19}. 

\smallskip

%\noindent{\bf Linkability.} 

\begin{defi}[Linkability] Let   $\mathit{LRS}=(\textsf{KeyGen, Sign, Verify, Link})$ be a \emph{linkable ring signature scheme}, and consider the following experiment ${\sf Exp_{LRS-Link}}$ involving a probabilistic polynomial time adversary $\mathcal{A}$:
\begin{enumerate}
\item $\mathcal{A}$ outputs $n$ tuples of the form $(pk_i,\sigma_i, m_i,R_i),$ for some $n$ polynomial in $\ell.$ At this, $R_1,\dots R_n$ and $R^*$ are sets of public keys (of finite size polynomial in $\ell$). Also, it outputs a triplet $(\sigma^*, m^*, R^*).$ Denote by $\mathit{PK}$ the set $\{pk_1,\dots, pk_n.\}$ 

\item The experiment outputs $1$ if the following conditions hold:
\begin{itemize}
    \item For all $i=1,\dots,n$ it holds $R_i\subseteq \mathit{PK}$ and also $R^*\subseteq \mathit{PK}.$
    \item For all $i=1,\dots, n$ both $\textsf{Verify}(R_i,m_i,\sigma_i)=1$ and $\textsf{Verify}(R^*,m^*,\sigma^*)=1$
   \item For all $i\ne j\in\{1,\dots,n\}$  we have $\textsf{Link}(\sigma_i,\sigma_j,m_i,m_j)=0$ and $\textsf{Link}(\sigma^*,\sigma_i,m^*,m_i)=0$
\end{itemize}
Otherwise, the experiment outputs $0.$
\end{enumerate}
We say $\mathit{LRS}$ attains \emph{linkability} if for every probabilistic polynomial time adversary $\mathcal{A},$ it holds that the above experiment outputs $1$ with negligible probability only.

\end{defi}

\smallskip 

%\noindent{\bf Anonymity.} 

\begin{defi}[Anonymity] \label{def:anon} Let   $\mathit{LRS}=(\textsf{KeyGen, Sign, Verify, Link})$ be a \emph{linkable ring signature scheme}, and consider the following experiment ${\sf Exp_{LRS-Anon}}$ involving a probabilistic polynomial time adversary $\mathcal{A}$ and a challenger $\mathcal{C}$:
\begin{enumerate}
\item On input the security parameter $\ell,$ $\mathcal{C}$ makes $n$ step by step executions of  $\textsf{Keygen}(1^{\ell}),$ for some $n\in \mathbb{N}$  polynomial in $\ell,$ and gets as output the corresponding pairs% \ar{$(pk_i,sk_i)$ ?}\mb{sure, thanks!}
$(pk_i,sk_i), i=1,\dots, n.$ Also, it stores the random coins involved in each key-pair generation $r_i.$ Further, it chooses a bit $b\in \{0,1\}$ uniformly at random. The public verification keys $\{pk_1,\dots, pk_n\}$ are forwarded to $\mathcal{A}.$ 
\item  $\mathcal{A}$ outputs a set of verification keys $U\subseteq \{pk_1,\dots, pk_n\}$ and two challenge verification keys $pk_0^*, pk_1^* \in \{pk_1,\dots, pk_n\}\setminus U.$ Both $U$ and $pk_0^*, pk_1^*$ are forwarded to the challenger $\mathcal{C}.$
\item $\mathcal{C}$ sends $r_i$, for each $pk_i\in U$ to the adversary (we thus assume the corresponding $sk_i$ are known to the adversary from this point on), 
\item Now, $\mathcal{A}$ sets a subset $R$ of public keys and interacts with the challenger, acting as a signing oracle. % \mb{CHECK (I think it doesnt matter) Not clear how. I made up how messages $R$ and $pk$ are chosen...could be adaptively or not}
At this point, $\mathcal{A}$ queries (adaptively) for a polynomial number of signatures, each on a message $m$ of its choice and all on  input $R$ and a verification key $pk$ from $U,$ with the constraint that $pk \in  \{pk_1,\dots, pk_n\}\setminus U.$ The challenger $\mathcal{C}$ outputs:
\begin{itemize}
    \item a bit chosen uniformly at random, if $$|\{pk_0^*,pk_1^*\}\cap R| =1,$$ 
    \item $\textsf{Sign}(sk, m, R)$ -- where $sk$ is the signing key corresponding to $pk$ -- if $pk\notin \{pk_0^*,pk_1^*\},$
    \item $\textsf{Sign}(sk_b^*, m, R)$ if $pk = pk_0^*,$
    \item $\textsf{Sign}(sk_{1-b}^*, m, R)$ if $pk = pk_1^*.$
\end{itemize}
\item $\mathcal{A}$ outputs a bit $\beta$. Now the challenger sets as output for the experiment the bit  $1$ if $\beta =b$.
 \end{enumerate} 
We say $\mathit{LRS}$ attains \emph{anonymity} if for every probabilistic polynomial time adversary $\mathcal{A},$ it holds that the above experiment outputs $1$ with negligible advantage only, where the advantage is defined as 
$${\sf Adv_{LRS-Anon}}(\mathcal{A}) = |Pr[({\sf Exp_{LRS-Anon}}) =1] - {1}/{2}|.$$
\end{defi}

\smallskip 

%\noindent{\bf Non-framability.}

\begin{defi}[Non-Frameability] Let   $\mathit{LRS}=(\textsf{KeyGen, Sign, Verify, Link})$ be a \emph{linkable ring signature scheme}, and consider the following experiment ${\sf Exp_{LRS-Frame}}$ involving a probabilistic polynomial time adversary $\mathcal{A}$ and a challenger $\mathcal{C}$:
\begin{itemize}
\item{Phase - I}
\begin{enumerate}
\item On input the security parameter $\ell,$ $\mathcal{C}$ makes $n$ step by step executions of  $\textsf{Keygen}(1^{\ell}),$ for some $n\in \mathbb{N}$  polynomial in $\ell,$ and gets as output the corresponding pairs %\ar{$(pk_i,sk_i)$ ?}\mb{sure thanks!}
$(pk_i,sk_i), i=1,\dots, n.$ Also, it stores the random coins involved in each key-pair generation $r_i.$ Further, it initializes a set $C = \emptyset$. 
The public verification keys $\{pk_1,\dots, pk_n\}$ are forwarded to $\mathcal{A}.$ 
\item $\mathcal{A}$ makes a polynomial number of queries to $\mathcal{C}$ :
\begin{itemize}
    \item Signing query $({\sf sign}, pk, m, R)$, for a message $m$ and a set of public keys $R.$  The challenger $\mathcal{C}$ will check if $pk\in R$ and, if so, output  $\textsf{Sign}(sk, m, R)$  where $sk$ is the matching secret key to $pk.$ %\mb{COMMENT: this was not said by the authors} 
    Otherwise, it outputs an error message $\perp.$ 
\item Corruption query $({\sf corrupt}, pk),$  for a public key $pk\in \{pk_1,\dots, pk_n\}$. The challenger  adds $pk$ to $C$ and returns to $\mathcal{A}$ the nonce  $r$ involved in its generation (for $\mathcal{A}$ may can be assumed to know the corresponding secret key $sk$). 
\end{itemize}
\item $\mathcal{A}$ outputs a triplet $(R^*,m^*,\sigma^*).$
\end{enumerate}
\item{Phase - II}
\begin{enumerate}
    \item The challenger provides all random coins $r_i, i=1,\dots, n$ to $\mathcal{A}$ (so now we may assume the adversary holds all involved secret keys),
    \item $\mathcal{A}$ outputs a triplet $(R^\dagger, m^\dagger,  \sigma^\dagger)$
     \item Now the challenger sets as output for the experiment the bit  $1$ provided that the following conditions hold:
     \begin{itemize}
         \item ${\sf Verify}(R^*,m^*, \sigma^*) = {\sf Verify} (R^\dagger, m^\dagger,  \sigma^\dagger)=1$ (i.e., both signatures output by $\mathcal{A}$ are valid), 
         \item $R^*\subseteq \{pk_1,\dots, pk_n\},$ and for all $i$ such that $pk_i\in R^*,$ it holds $pk_i\notin C$ (i.e., al verification keys in $R^*$ are from uncorrupted users),
         \item $\sigma^*$ was not part of the output of a signing query, 
         \item ${\sf Link}(\sigma^*,\sigma^\dagger, m^*, m^\dagger) =1.$
     \end{itemize}
    \end{enumerate}
\end{itemize}

Now, the scheme $\mathit{LRS}$ attains \emph{non-framability} provided that ${\sf Exp_{LRS-Frame}}(\ell)$ outputs $1$ with negligible probability only. 
\end{defi}

%\mb{CHECK: $R$ in the above two definitions as chosen by the adversary is unrestricted, he may insert there public keys outside $\{pk_1,\dots, pk_n\}$--I think this is ok}

\smallskip 

%\noindent{\bf Unforgeability.}

\begin{defi}[Unforgeability]
Let   $\mathit{LRS}=(\textsf{KeyGen, Sign, Verify, Link})$ be a \emph{linkable ring signature scheme}, and consider the following experiment ${\sf Exp_{LRS-Unf}}$ involving a probabilistic polynomial time adversary $\mathcal{A}$ and a challenger $\mathcal{C}$:
\begin{enumerate}
\item On input the security parameter $\ell,$ $\mathcal{C}$ makes $n$ step by step executions of  $\textsf{Keygen}(1^{\ell}),$ for some $n\in \mathbb{N}$  polynomial in $\ell,$ and gets as output the corresponding pairs %\ar{$(pk_i,sk_i)$ ?}\mb{Sure, thanks!}
$(pk_i,sk_i), i=1,\dots, n.$ Also, it stores the random coins involved in each key-pair generation $r_i.$ Further, it initializes a set $C = \emptyset$. 
The public verification keys $\{pk_1,\dots, pk_n\}$ are forwarded to $\mathcal{A}.$ 
\item $\mathcal{A}$ makes a polynomial number of  signing queries to $\mathcal{C},$ described as follows: 
\begin{itemize}
    \item Signing query $({\sf sign},i, m, R)$, for a message $m,$ and index $i$ and a set of public keys $R.$  The challenger $\mathcal{C}$ will check if $pk_i\in R $ and, if so, output  $\sigma = \textsf{Sign}(sk_i, m, R)$  where $sk_i$ is the matching secret key to $pk_i.$
    Otherwise, it outputs an error message $\perp.$  
\item Corruption query $({\sf corrupt}, pk),$  for a public key $pk\in \{pk_1,\dots, pk_n\}$. The challenger  adds $pk$ to $C$ and returns to $\mathcal{A}$ the nonce  $r$ involved in its generation (we shall assumme thus that $sk$ is then accessible to $\cA$). 
\end{itemize}
\item $\mathcal{A}$ outputs a triplet $(R^*,m^*,\sigma^*).$ Now ${\sf Exp_{LRS-Unf}}$ outputs $1$ provided that:
\begin{itemize}
    \item $R^*\subseteq \{pk_1,\dots, pk_n\}$
    \item $\mathcal{A}$ never made a signing query of the form $({\sf sign}, \cdot, m^*, R^*)$
    \item ${\sf Verify}(R^*,m^*, \sigma^*) =1.$
\end{itemize}
\end{enumerate}
Now, the scheme $\mathit{LRS}$ attains \emph{unforgeability} provided that ${\sf Exp_{LRS-Frame}}(\ell)$ outputs $1$ with negligible probability only.

\end{defi}

\subsection{Security Analysis of our Linkable Ring Signature Scheme}
\label{appendix:security-proofs}

%\section{ Analysis of our proposal.} 

The linkable ring signature scheme we present in~\ref{sec:lsag-schema} can be proven to fulfill all aforementioned security definitions (under the random oracle model, and with the suitable assumptions on the underlying group). As we have noted, it is  essentially  a particular case of the scheme presented in~\cite{DBLP:conf/acisp/LiuWW04}, where the underlying cyclic group is taken from a ECC scenario, and can thus the subsequent discussion can be essentially derived  from the analysis in~\cite{DBLP:conf/acisp/LiuWW04, Liueprint}. There are however some differences in the related security definitions, for we have followed above the more recent terminology from~\cite{DBLP:conf/eurocrypt/0001DHKS19}. In particular, we here include non-framability in the presence of colluding malicious signers as a desirable property. Most importantly, following Backes et al., we allow for corruptions in the unforgeability definition, which is not considered in~\cite{DBLP:conf/acisp/LiuWW04}. This is particularly relevant for our use case; a collusion of malicious users of the system should indeed not be able to produce a valid signature that can be verified using a set of public keys from honest users. 

\smallskip 

\noindent{\bf Unforgeability.}
The following theorem proves the unforgeability of the proposed scheme.

\begin{theorem} The ring signature scheme presented in~\ref{sec:lsag-schema} attains  unforgeability in the random oracle model and under the DLOG assumption in the underlying cyclic group $E = (G)$. 
\end{theorem}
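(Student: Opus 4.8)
The plan is to prove unforgeability by a reduction to the discrete logarithm problem in $E=(G)$, using the general forking lemma in the random oracle model. Concretely, I would start from a hypothetical probabilistic polynomial-time adversary $\cA$ that wins ${\sf Exp_{LRS-Unf}}$ with non-negligible probability $\epsilon$, and build a solver $\cS$ that, on input a challenge $Y = x\cdot G$, recovers $x$. The solver generates the $n$ key pairs as the challenger would, except that it plants the challenge in one position chosen uniformly at random: it guesses an index $i^*\in\{1,\dots,n\}$, sets $pk_{i^*}=Y$, and generates every other pair $(pk_i,sk_i)$ honestly, so it knows each $sk_i$ with $i\neq i^*$. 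The guess $i^*$ marks the position at which $\cS$ hopes $\cA$ will be forced to ``use'' an honest secret key; it is correct with probability at least $1/n$ provided the winning forgery involves $pk_{i^*}$ as an uncorrupted ring member.

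Next I would describe how $\cS$ answers $\cA$'s queries while behaving as a faithful random oracle for $H$ and $\mathit{H2P}$. Corruption queries on $pk_i$ with $i\neq i^*$ are answered by returning the stored coins; a corruption query on $pk_{i^*}$ forces an abort. Signing queries $({\sf sign},i,m,R)$ with $i\neq i^*$ are answered honestly. The interesting case is a signing query on the planted index $i^*$, where $\cS$ does not know $sk_{i^*}=x$. Here I would use two standard simulation tricks. First, I program $\mathit{H2P}(\mathit{HPK}(R))=L=r\cdot G$ for a fresh known $r$, so that the linking tag is computable as $T=x\cdot L=r\cdot Y$ without knowledge of $x$ and stays consistent across all signatures for that ring. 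Second, I simulate the ring by the usual close-the-ring technique: choose all $s_i$ and the seed $c_0$ at random, walk around the ring computing each $A_i,B_i$, and program $H$ at the corresponding points so that the cycle closes at $c_n=c_0$. This simulation is perfect except for the negligible-probability event that a programmed point collides with a previously defined oracle value.

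The core of the argument is the extraction step, which I would carry out with the forking lemma applied to the hash query that fixes the incoming challenge of the honest position $i^*$ in a successful forgery. Because the forgery is accepting, $\cA$ must, in the random oracle model, have queried $H$ on all ring points $m^*\| T^*\| A^*_i\| B^*_i$; rewinding $\cA$ and reprogramming the relevant answer yields, with the usual quadratic loss, a second accepting forgery that agrees with the first up to the fork. The signature is exactly a Fiat--Shamir compiled OR-proof of knowledge of some $sk_\pi$ with $pk_\pi=sk_\pi\cdot G$ and $T=sk_\pi\cdot L$, so its underlying $\Sigma$-protocol enjoys special soundness: two accepting transcripts sharing the honest-position commitment $(A^*_{i^*},B^*_{i^*})$ but carrying distinct challenges $c_{i^*-1}\neq\tilde c_{i^*-1}$ satisfy $(s_{i^*}-\tilde s_{i^*})\cdot G=(\tilde c_{i^*-1}-c_{i^*-1})\cdot pk_{i^*}$, whence $x=(s_{i^*}-\tilde s_{i^*})(\tilde c_{i^*-1}-c_{i^*-1})^{-1}\bmod g$, i.e.\ $\log_G Y$.

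Finally I would assemble the probabilities: conditioned on the correct guess of $i^*$ (probability at least $1/n$) and on the forking succeeding, $\cS$ outputs $x$, so a non-negligible $\epsilon$ would contradict the hardness of DLOG in $E$. I expect the main obstacle to be the forking bookkeeping: identifying the exact hash query on which to rewind so that the honest-position commitment is preserved while its challenge changes, and arguing that the extracted witness is the discrete logarithm of $pk_{i^*}$ rather than of some other (necessarily uncorrupted) ring member. This is also where the novelty relative to the analysis of~\cite{DBLP:conf/acisp/LiuWW04} concentrates, since corruption queries must be interleaved with the simulation without ever revealing $sk_{i^*}$, and the winning condition must be read as requiring the forgery ring $R^*$ to consist of uncorrupted keys, as otherwise a trivially valid signature could be produced with a known secret key.
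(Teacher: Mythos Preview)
Your proof is correct and follows the standard reduction template, but it differs from the paper's argument in one structural choice: the way the DLOG challenge is embedded into the ring of public keys. You plant $Y$ at a single uniformly chosen index $i^*$ and hope that the rewinding/extraction position coincides with it, incurring a clean $1/n$ loss and aborting only if $\cA$ corrupts $pk_{i^*}$. The paper instead follows the Coron-style partitioning from Herranz's thesis: each $pk_i$ is independently set either honestly (with probability $\mu=(5/6)^{\mathcal{Q}_C}$) or as a random multiple $\alpha_i\cdot Y$ of the challenge, so that \emph{every} uncorrupted ring member is, with constant probability, a viable extraction target. Both strategies then proceed identically---programming $\mathit{H2P}$ so that $L=r\cdot G$ with known $r$ (hence $T=r\cdot pk$ is computable without the secret key), closing the ring by back-patching $H$, and applying the forking lemma for ring signatures to obtain two transcripts with matching commitment $(A_\pi,B_\pi)$ but distinct challenges. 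Your single-guess embedding is simpler and more transparent; the paper's partitioning buys a reduction whose loss does not scale with $n$ when the number of corruptions is bounded, at the price of a more delicate abort analysis. Your closing remark that the winning condition must force $R^*$ to lie outside the corrupted set $C$ is well taken: the paper's stated definition omits this clause, yet its proof (like yours) tacitly relies on it, since otherwise the extracted witness could correspond to a key the adversary already holds.
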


\begin{proof}
Let $\Adv(\cA)$ denote the advantage of the adversary against the unforgeability experiment, i.e.
$$ \Adv(\cA) = Pr[{\sf Exp_{LRS-Unf}} = 1] $$ 

We will structure the proof in a game-based fashion, namely, we start with a game that mimmicks the real attack  and end in a game in which the adversary's advantage is easily bounded,  and for which we can bound the
difference in the adversary's advantage between any two consecutive
games. Following standard notation, we denote by
$\Adv(\cA,G_i)$ the advantage of the adversary $\cA$ in Game~$i$.

\initexp This first game corresponds to a real attack.  By definition,
$\Adv(\cA,\gamecur) = \Adv(\cA).$

\newexp 
We now assume all involved hash functions behave in an ideal manner; namely, the outputs of $H,\mathit{H2P}, \mathit{HPK}$ are assumed to be chosen uniformly at random in the appropriate range for each new entry, which is stored in a corresponding hash list so that subsequent queries are answered consistently. Queries to these hash functions will be modelled as queries to random oracles $\cO_H, \cO=_{\mathit{H2P}}, \cO_{\mathit{HPK}}.$
We make explicit the so-called \emph{random oracle assumption} by assuming 
$\Adv(\cA,\gamecur) = \Adv(\cA, \gameold).$

\newexp 
Let us assume the 
$\Adv(\cA,\gamecur) \ge \frac{1}{p(\ell)},$ for some polynomial $p.$ Namely, $\cA$, on input a certain set of ring public keys  $\{pk_1,\dots, pk_n\}$ it outputs a valid signature $(R^*,m^*,\sigma^*)$ with non-negligible probability. We will see how to construct a machine $\cS,$  a simulator using $\cA$ as a subroutine which, on input a DLOG instance on the group $E$ $(p,g, G, Y)$ will output in polynomial time a solution, i.e.,  $x\in\{1,\dots, g-1\}$   $Y = x\cdot G.$

At this, $\cS$ must simulate the challenger from the ${\sf Exp_{LRS-Unf}}$ game for $\cA.$ Lets see how to do this.

Following the strategy from Theorem 4.2. in~\cite{TesisJavi}, we define $\mu = (\frac{5}{6})^{\mathcal{Q}_C},$ where 
$\mathcal{Q}_C$ is a bound on the number of corrupt calls from $\cA$ (and is indeed polynomial in the security parameter).  SHould one restrict to adversaries which may not corrupt any user, then set $\mu =0.$ 

Our simulator $\cS$ will creat a table CORR in the following way: for each $i=1,\dots, n$ a key pair is set by choosing a bit $c_i$ (as $c_i=0$ with probability $\mu$ and $c_i=1$ with probability $1-\mu$):  
\begin{itemize}
    \item if $c_i=0,$ $\cS$ selects uniformly at random  $sk_i\in\{1,\dots, g-1\}$ and sets   $pk_i= sk_i\cdot G,$
    \item if $c_i=1,$ $\cS$ selects uniformly at random  $alpha_i\in\{1,\dots, g-1\}$ and sets   $pk_i= alpha_i\cdot Y,$
\end{itemize}

The entries $(i, pk_i, sk_i, c_i)$ are stored in the CORR table. Now the attacker is presented with the public keys $pk_1,\dots, pk_n$ for running $\sf Exp_{LRS-Unf}.$

Now whenever $\cA$ makes a query of the form $({\sf Corrupt}, pk_i)$, for $i=1,\dots, n$,
 $\cS$ looks up in the CORR table and, if $c_i=1,$ forwards the answer $\alpha_i$ to $\cA,$ halting however  if $c_i=1.$ Note that the probability that $\cS$ halts is thus $1-\mu^{\mathcal{Q}_C}\le \frac{1}{6}.$

Now, assume the simulator $\cS$ is called with a query $(\textsf{sign}, i, m, \{pk_1',\dots, pk_m'\})$ where each $pk_i'\in \{pk_1,\dots, pk_n\}$ and the attacker has not queried $(\textsf{Corrupt}, pk_i').$

Assuming the simulator is actually controlling the random oracle $\mathit{H2H}$, it is clear that it  may simply be simulated by choosing,  on a query $X,$  a random values $r_X\in \{1,\dots, g-1\}$ and outputting the curve point $r_x\cdot G.$ As a result, we may assume the simulator  actually holds $r \in \{1,\dots, g-1\}$ so that $$L=\mathit{H2H}(\mathit{HPK}((pk_1,\dots, pk_n))=  r \cdot G.$$ Note then that for each $i= 1,\dots, m,$ it holds $r\cdot pk_i' = sk_i'\cdot L,$ from which $\cS$ can compute the needed values for signing in spite of  ignoring the corresponding secret keys (for by construction, $sk_i' =\alpha_i'x,$ for some $x$ such that $Y = x\cdot G$). 

Now, here is how $\cS$ constructs an answer for the given query:
\begin{enumerate}
          \item $\cS$ chooses at random an index $\pi\in \{1,\dots, m\},$
        \item For each $i\in \{1,\dots m\},$chooses independently and uniformly at random $s_i \in \{1,\dots, g-1\}$  
     and  $c_1,\dots, c_m \in \mathbb{F}_p$
                       \item back-patch the random oracle so that for each $i=1,\dots, m$ it holds:
                       $$A_i =s_i\cdot G + c_{i-1}\cdot pk_i', B_i = s_i\cdot r \cdot  pk_\pi' + c_{i-1}\cdot  r\cdot pk_\pi' . $$ 
                       \item Set $T= r\cdot pk_\pi'$, output $$\sigma =(pk_1',\dots, pk_m', T, s_1,\dots, s_m, c)$$
                       \end{enumerate}

Note that now, as argued in~\cite{Liueprint} for the signature simulations and considering the halting probability of $\cS,$ not all invocations of $\cA$ will produce a successful forgery, yet it will indeed do so  with non negligible probability. 

Furthermore, either mimicking the rationale from~\cite{Liueprint} or relying on the so-called \emph{forking lemma for ring signatures} as introduced in~\cite{TesisJavi}, we conclude that as $\cA$ is able to produce a successfull forgery, he can be used by $\cS$ to get two different forgeries w.r.t. the same ring of users $R=\{pk_1',\dots, pk_m'\},$ on the same message, reusing the same random values inputs. 

Let the corresponding signatures be:
$$\sigma = (R, T, s_1,\dots, s_m, c),$$\mbox{ with associated hashes } $c_1,\dots, c_n$
and 

$$\sigma^* = (R, T^*, s_1^*,\dots, s_m^*, c^*)$$
 \mbox{ with associated hashes } $c_1^*,\dots, c_m^*.$

From the aforementioned \emph{forking lemma} it yields  that there must be a value $\pi \in \{1,\dots, m\}$ such that the corresponding $c_\pi = c_\pi^*$, and, 
as a result it must be 
$A_\pi = A_\pi^*$ and $B_\pi=B_\pi^*$ thus, in particular, 
$$
A_{\pi} = s_{\pi} \cdot G + c_{\pi-1} \cdot pk_{\pi}' = 
s^*_{\pi} \cdot G + c_{\pi-1}^*\cdot pk_{\pi}' = A_{\pi}^*.$$

Now as $pk_{\pi}' = \alpha_{\pi}' \cdot Y $ we have %and $pk_{\pi}' = sk_{\pi}' \cdot G,$ we have, 

$$x = \frac{s_\pi -s_\pi^*}{\alpha_{\pi}'(c_{\pi-1}^* - c_{\pi -1})} \pmod q,$$ which fulfills $x \cdot G = Y.$  Summing up, $\cS$ is able to output thus the solution $x$ as above with non negligible probability, so the desired result is proven.
\end{proof}

\smallskip 

\noindent{\bf Linkability.}
Informally, it is easy to see why the scheme from Section ~\ref{sec:lsag-schema} is unlinkable; indeed, to construct  two correct signatures linking to different users is to actually have their private keys, so the hardness of DLOG should prevent an adversary from breaking linkability. More formally, Liu et al. state in Theorem 3 of~\cite{Liueprint} an analagous of the  result we prove below (essentially adapting their rationale):   

\begin{theorem}  The ring signature scheme presented in~\ref{sec:lsag-schema} attains linkability in the random oracle model and under the DLOG assumption in the underlying cyclic group $E = (G)$. 
\end{theorem}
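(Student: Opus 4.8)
The plan is to reduce an adversary $\cA$ against linkability to a DLOG solver $\cS$ in $E=(G)$, reusing the game-based machinery just developed for unforgeability. As there, I would first pass to the random-oracle game, so that $H$, $\mathit{H2P}$ and $\mathit{HPK}$ are answered by random oracles whose replies $\cS$ stores and may program. The one ingredient special to linkability is the shape of the link tag: since $\textsf{Link}(\sigma_i,\sigma_j,m_i,m_j)=1$ holds exactly when $T_i=T_j$, and $T=sk_\pi\cdot L$ with $L=\mathit{H2P}(\mathit{HPK}(\mathit{PK}_n))$ a curve point that $\cS$ can program as $L=r\cdot G$ for a known $r$, the tag of any well-formed signature on a fixed ring is completely determined by the signer: to each ring member $pk_j=x_j\cdot G$ corresponds exactly one admissible tag $T_j=x_j\cdot L$, so a ring of $n$ keys admits only $n$ tags.

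Suppose then that $\cA$ wins ${\sf Exp_{LRS-Link}}$ with non-negligible probability, outputting $n$ tuples together with $(\sigma^*,m^*,R^*)$ such that all $n+1$ signatures verify, all rings lie in $\mathit{PK}$ with $|\mathit{PK}|=n$, and all $n+1$ signatures are pairwise unlinked, i.e. their tags are pairwise distinct. The core of the argument is a pigeonhole count against these $n$ admissible values. I would apply the forking lemma for ring signatures (imported from~\cite{TesisJavi} exactly as in the unforgeability proof) to extract, from a successful fork of a valid signature, a witness $x$ with $T=x\cdot L$ and $x\cdot G=pk\in\mathit{PK}$, thereby pinning each tag to one of the $n$ ring members. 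Since $n+1$ pairwise-distinct tags cannot all coincide with the $n$ admissible values $T_1,\dots,T_n$, the extractions cannot be jointly consistent; embedding the DLOG challenge in the public keys (setting the relevant $pk=\alpha\cdot Y$ precisely as in the unforgeability reduction) lets $\cS$ convert the forking transcripts of the offending signature into a computation of $\mathrm{dlog}_G(Y)$, contradicting the DLOG assumption.

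I expect the main obstacle to be the ring-dependence of the tag. Because $L$ is a function of the entire ordered ring $\mathit{PK}_n$, two signatures produced on \emph{different} rings carry incomparable tags, and the clean count of ``only $n$ admissible tags'' is meaningful only after the relevant signatures are argued to share a common $L$. I would therefore first program $\mathit{H2P}\circ\mathit{HPK}$ so that the ring carrying the embedded instance maps to $r\cdot G$, restricting attention to a single common ring, and only then run the per-signature forkings. The second, more technical, point is to make the repeated rewindings succeed jointly: each fork succeeds only with the probability guaranteed by the forking lemma, so I would bound the aggregate loss over the $n+1$ extractions by a polynomial factor and fold in the halting probability of $\cS$, exactly as the signing-simulation analysis of~\cite{Liueprint} does, leaving the remaining consistency checks routine and parallel to the unforgeability argument.
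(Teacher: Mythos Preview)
Your plan is essentially the paper's: pass to the random oracle model, lean on the forking lemma for ring signatures from~\cite{TesisJavi}, and recycle the unforgeability machinery to reach DLOG. The paper organizes the core step a bit differently from you: instead of a pigeonhole over all $n+1$ tags, it singles out a pair of unlinked signatures $\sigma,\sigma'$ (with distinct tags $T\neq T'$), argues they correspond to distinct signer indices $\pi\neq\pi'$, and then applies the forking lemma \emph{twice} --- once per signature --- to produce four forgeries and extract both $sk_\pi$ and $sk_{\pi'}$, concluding that since only one of these could have been held by $\cA$, the other constitutes a DLOG solution. Your $n{+}1$-versus-$n$ count is arguably closer to the structure of the stated definition, but the underlying extraction mechanism is the same.

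One step in your sketch does not go through as written. You propose to ``embed the DLOG challenge in the public keys, setting the relevant $pk=\alpha\cdot Y$ precisely as in the unforgeability reduction''; but in ${\sf Exp_{LRS-Link}}$ there is no challenger key-generation phase --- $\cA$ itself outputs $pk_1,\dots,pk_n$ --- so $\cS$ has no opportunity to plant $Y$ in any key. The paper's proof is loose at exactly the same point (it speaks of $\cA$ being ``initially given one corresponding secret key'' and of $\cS$ computing the DLOG of an adversary-chosen $pk$, neither of which matches its own Definition~2, and is really borrowing the setup of~\cite{Liueprint}). Observe, though, that your pigeonhole combined with the per-signature forking extraction --- which pins each tag $T$ to some ring member's key regardless of who chose the keys --- already yields a direct contradiction in the random oracle model, so the explicit DLOG embedding you reach for is, on the stated definition, dispensable.
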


\begin{proof}
Let $\Adv(\cA)$ denote the advantage of the adversary against the linkability experiment, i.e.
$$ \Adv(\cA) = Pr[{\sf Exp_{LRS-Link}} = 1] $$ 

Once again, we will structure the proof in a game-based fashion and denote by
$\Adv(\cA,G_i)$ the advantage of the adversary $\cA$ in Game~$i$.

\initexp This first game corresponds to a real attack.  By definition,
$\Adv(\cA,\gamecur) = \Adv(\cA).$

\newexp 
Again, we assume the involved hash functions behave in an ideal manner; their queries will thus be modelled as queries to random oracles $\cO_H, \cO=_{\mathit{H2P}}, \cO_{\mathit{HPK}}.$ The  \emph{random oracle assumption} is thus made explicit by assuming 
$\Adv(\cA,\gamecur) = \Adv(\cA, \gameold).$

Now we can argue that  $\Adv(\cA,\gamecur)$ is bounded by the probability of success of $\cA$ in solving DLOG in $E,$ just as it is done in the proof of Theorem 3 from~\cite{DBLP:conf/acisp/LiuWW04}. Indeed, let us define a simulator $\cS$ which will compute the DLOG of one of the public keys $\{pk_1,\dots, pk_n\}$ involved in the linkability experiment ${\sf Exp_{LRS-Link}}$. We will again make use of a so-called forking argument. Indeed, our simulator $\cS$ interacts with $\cA$ in the linkability experiment, and so $\cA$ succeeds by outputting a pair of signatures $\sigma_i,\sigma_j$ which are not linked and for which $\cA$ only was initially given one corresponding secret key (either $sk_i$ or $sk_j$). Let $\sigma$ and $\sigma'$ be the mentioned signatures, and $T$ and $T'$ be the corresponding linkability tags (namely, of the form $sk_i \cdot L$ and $sk_k \cdot L'$ (note that the signatures may not be verifiable with the same set of public keys, yet both the ring of users corresponding to $\sigma$ and the ring of users corresponding to $\sigma'$ have public keys contained in $pk_1,\dots, pk_n.$ It is easy to see that, with non-negligible probability, $\sigma$ and $\sigma'$ may correspond to two different public keys $\pi$ and $\pi'$ (see the argument contained in the proof of Theorem 3 from~\cite{DBLP:conf/acisp/LiuWW04}).

As a result, using the forking lemma twice (once for $\sigma,$ second time for $\sigma'$) it follows that $\cA$ can, by the forking lemma for ring signatures (see~\cite{Javi, TesisJavi}), produce four different forgeries: 
- two for the same set of public keys related to $\sigma$
- two for the same set of public keys related to $\sigma'.$

As a result, by the same argument used in the proof of unforgeability above, $\cA$ can both retrieve $sk_\pi$ and $sk_{pi'};$ given that, by assumption, only one of these secret keys was already known to $\cA,$ to the adversary has indeed been able to solve DLOG for one of the public keys in $\{pk_1,\dots, pk_n\}.$
\end{proof}

\begin{comment}
Informal explanation:
Given two linkable ring signatures referred to the same tuple of identities
\begin{subequations}
\begin{align}
        SIG &= (\mathit{PK}_n, T, s_1, ..., s_n, c) \\
        SIG' &= (\mathit{PK}_n, T', s'_1, ..., s'_n, c')
\end{align}
\end{subequations}
regardless of authenticated messages, if they are valid according to the just shown verification algorithm, they have been generated from same signer if and only if
\begin{equation}
    T == T'
\end{equation}
\begin{theorem} \mb{mimic Theorem 3 of Liu et al.}
\end{theorem}
\end{comment}

\smallskip 

\noindent{\bf Anonymity.}
%Once again it is really intuitive to see why the scheme from Section %~\ref{sec:lsag-schema} is anonymous, i.e.,  even if an adversary is %given access to a signing oracle, he will not be able to tell whether %a certain signature has been produced using $sk_0$ or $sk_1$ for two %users identified by $pk_0$ and $pk_1,$ even after having seen several %signatures from these users (which must however be always verified %withing the same ring containing both users).   
%
The following result can be attained following  Liu et al.s rationale from  Theorem 2. As we are using a different definition we include a sketch the proof below, which is a straightforward adaptation of the one that can be found in~\cite{Liueprint}. 

\begin{theorem}  The ring signature scheme presented in~\ref{sec:lsag-schema} attains anonymity in the random oracle model and under the DDH assumption in the underlying cyclic group $E = (G)$. 
\end{theorem}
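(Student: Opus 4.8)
The plan is to mirror the game-based structure used above for unforgeability and linkability, reducing anonymity to the DDH assumption in $E$. I would begin with Game~$0$ equal to the real experiment ${\sf Exp_{LRS-Anon}}$, and pass to Game~$1$ in which $H$, $\mathit{H2P}$ and $\mathit{HPK}$ are replaced by random oracles $\cO_H$, $\cO_{\mathit{H2P}}$, $\cO_{\mathit{HPK}}$, so that the random oracle assumption gives $\Adv(\cA,G_1)=\Adv(\cA,G_0)$. The crucial first observation is that, exactly as in the signing simulation of the unforgeability proof, every component of a signature $\sigma=(\mathit{PK}_n,T,s_1,\dots,s_n,c)$ other than the tag $T$ can be produced by choosing $s_1,\dots,s_n$ and $c$ uniformly at random and back-patching $\cO_H$ to define the $A_i,B_i$ consistently. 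This simulated distribution is independent of which ring member actually signs, so conditioned on $T$ the values $(s_1,\dots,s_n,c)$ carry no information about the signer. Hence the adversary's whole advantage must come from the tags $T=sk\cdot L$, where $L=\mathit{H2P}(\mathit{HPK}(\mathit{PK}_n))$ depends only on the ring.

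Next I would isolate what the bit $b$ controls: for every signing query on a ring $R$ the returned tag is $sk_b^*\cdot L_R$ when $pk=pk_0^*$ and $sk_{1-b}^*\cdot L_R$ when $pk=pk_1^*$, whereas the tags $sk\cdot L_R$ for ordinary $pk$ are independent of $b$. By the constraint in Definition~\ref{def:anon}, queries with $|\{pk_0^*,pk_1^*\}\cap R|=1$ are answered with an independent random bit and can be discarded; the reduction therefore only has to embed its challenge in rings containing both (or neither) challenge keys, which sidesteps the trivial linkability leak. What remains is to show that the joint distribution of the challenge tags is unchanged when the roles of $sk_0^*$ and $sk_1^*$ are swapped.

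I would establish this through a short hybrid over the two challenge slots, each step being a DDH instance. Given a DDH triple $(G,A,B,C)$, set $pk_0^*=A$ (so $sk_0^*$ is the unknown discrete logarithm of $A$), choose $sk_1^*$ and hence $pk_1^*=sk_1^*\cdot G$ explicitly, and program $\cO_{\mathit{H2P}}$ so that on the rings used by $\cA$ the value $L_R$ is a fresh re-randomization of $B$; the corresponding $pk_0^*$-tag is then set to the matching re-randomization of $C$, while the $pk_1^*$-tag is computed honestly as $sk_1^*\cdot L_R$ (possible since $sk_1^*$ is known). When $C=\log_G(A)\cdot B$ the $pk_0^*$-tags are exactly $sk_0^*\cdot L_R$ (the $b=0$ behaviour); when $C$ is random they are independent group elements. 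A symmetric second hybrid replaces those independent tags by $sk_1^*\cdot L_R$, yielding the $b=1$ behaviour, so any non-negligible distinguishing advantage between $b=0$ and $b=1$ translates into a DDH distinguisher $\cD$.

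The step I expect to be the main obstacle is the consistent embedding of a single DDH instance across the polynomially many signing queries, which may involve many distinct rings and hence many distinct values $L_R$: I would rely on the random self-reducibility of DDH, mapping $(G,A,B,C)\mapsto(G,A,\,rB+tG,\,rC+tA)$ for fresh random $r,t$ to manufacture independent-looking pairs $(L_R,T_R)$ that are simultaneously DH-consistent or simultaneously random. Care is also needed so that the programming of $\cO_{\mathit{H2P}}$ never collides with values already fixed (abort otherwise, at negligible cost) and so that signatures under ordinary keys, whose tags are $b$-independent, stay faithfully simulated alongside the embedded challenge tags. As noted by Liu et al., the remainder is a routine adaptation of their anonymity argument.
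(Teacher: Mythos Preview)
Your proposal is correct and rests on the same core idea as the paper's sketch: embed a DDH challenge $(\alpha,\beta,\gamma)$ by setting one challenge public key to $\alpha$, programming $\cO_{\mathit{H2P}}$ so that $L=\beta$, and using $\gamma$ as the corresponding tag; then argue that a real DH triple yields the honest distribution while a random $\gamma$ makes the tag independent of the signer. Where you diverge from the paper is in the level of rigor rather than in strategy. The paper's argument, following Liu et al., embeds the DDH instance in a single simulated signature (choosing one index $\pi$, fixing $pk_\pi=\alpha$, $L=\beta$, $T=\gamma$) and leaves implicit how the full oracle interface of Definition~\ref{def:anon} is handled. You instead make the two-step hybrid explicit (real $sk_0^*$-tags $\to$ random tags $\to$ real $sk_1^*$-tags) and invoke the random self-reducibility of DDH to re-randomize $(B,C)$ across the polynomially many rings queried by $\cA$. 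This buys you a reduction that cleanly covers arbitrary adaptive signing queries on multiple rings, at the cost of a slightly longer argument; the paper's sketch is terser but relies on the reader filling in exactly the details you spell out.
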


\begin{proof} (sketch) 
Let $\Adv(\cA)$ denote the advantage of the adversary against the anonymity experiment, i.e.
$$ \Adv(\cA) = Pr[{\sf Exp_{LRS-Anon}} = 1] $$ 

Once again we structure the proof in a game-based fashion, and we denote by
$\Adv(\cA,G_i)$ the advantage of the adversary $\cA$ in Game~$i$.

\initexp This first game corresponds to a real attack.  By definition,
$\Adv(\cA,\gamecur) = \Adv(\cA).$

\newexp 
All involved hash functions are assumed to behave in an ideal manner;as before, by the random oracle assumption, that 
$\Adv(\cA,\gamecur) = \Adv(\cA, \gameold).$ Now we can argue that  $\Adv(\cA,\gamecur)$ is bounded by the probability of success of $\cA$ in solving DDH in $E.$ The main idea behind this argumentation is that a polynomial time distinguisher $\cD$ for the DDH in $E$ can be construced from a succesful adversary $\cA$ winning the anonymity game from Definition~\ref{def:anon}. 

Let $\cD$ be a probabilistic polynomial time algorithm that gets as input a triplet $(\alpha, \beta, \gamma)\in E^3,$ so that there exist uniformly at random values $a,b,c\in \{1,\dots, g-1\}$ such that either $\alpha = a\cdot G,$ $\beta = b\cdot G $ and $\gamma = c \cdot G$ or
$\alpha = a\cdot G,$ $\beta = b\cdot G $ and $\gamma = ab \cdot G$. Our goal is to prove that  $\cD,$ using a successful linkability adversary $\cA,$  is able to distinguish (with probability non-negligibly above $\frac{1}{2}$) which is the case. Let then $\cD$  act as an anonymity challenger for $\cA$ as follows: 

\begin{enumerate}
       \item $\cD$ initially sets   $pk_\pi = \alpha,$ and selects $pk_i$ for $i\ne \pi$ choosing uniformly at random secret keys $sk_i \in \{1,\dots, g-1\}$,
 \item for $i = \pi,...,n,1,\dots, \pi-1,$ $\cD$  selects random $s_i\in \{1,\dots, g-1\}$ and sets 
  $$c_{i+1}= H(m || A_i || B_i)$$
 with $A_i,B_i$ computed as in the protocol specification form the above known values but fixing (as $\cD$ ignores the secret key  $sk_\pi$) the values $T= \gamma$ and   back-patching on the random oracle on $L$, i.e., fixing $L = \beta.$

\item $\cD$ outputs a signature 
$$\sigma = (pk_1,\dots, pk_n,T, s_1,\dots, s_n,c_n).$$
  \end{enumerate}

Now in the simulation $\cD$ calls $\cA$ with the corresponding challenge and simulates the random oracles for him faithfully (except for the query $L$ and queries involved in the above construction of the $c_i's,$ where consistency is mantained). 

Now, suppose $\cA$ outputs, after interacting with $\cD,$ a bit $\hat{b}$. Indeed, $\cD$ will output $1$ if indeed $\hat{b}$ is a pointer to $\pi.$ Clearly, if the input triplet $(\alpha, \beta, \gamma)$ is not a Diffie-Hellman triplet, all signers will be identical for $\cA$ and  thus its advantage is exactly $\frac{1}{2}.$ However, if the given triplet is indeed a Diffie-Hellman triplet, $\cA$ will have a non-negligible advantage, and so will, by a simple probability argument, have $\cD$ in solving the corresponding DDH instance. 
\end{proof}

\smallskip 

\noindent{\bf Non-Framability.}
Intuitively, it is easy to see that an adversary violating non-framability must indeed have succeeded in a forging challenge, so non-framability is obtained from unforgeability.  
%\mb{TODO: CHeck why Backes did not realize! -- He did, it is OK}

\begin{theorem}  The ring signature scheme presented in~\ref{sec:lsag-schema} achieves non-framability in the random oracle model and under the DLOG assumption in the underlying cyclic group $E = (G)$. 
\end{theorem}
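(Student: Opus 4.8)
The plan is to reduce non-framability to the unforgeability property proved above, following the intuition stated just before the theorem: a successful framing adversary must, in particular, have produced a valid signature on a ring of honest (uncorrupted) users without holding any of their secret keys, which is exactly a forgery. Concretely, I would build a simulator $\cS$ that runs in ${\sf Exp_{LRS-Unf}}$ while perfectly emulating the non-framability challenger for a framing adversary $\cA$.

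First, $\cS$ receives the public keys $pk_1,\dots,pk_n$ from its unforgeability challenger and hands them to $\cA$ as the Phase-I input. Throughout Phase I, $\cS$ relays queries transparently: a signing query $({\sf sign},pk,m,R)$ from $\cA$ is served by locating the index $i$ with $pk=pk_i$ and forwarding $({\sf sign},i,m,R)$ to the unforgeability signing oracle, while a corruption query $({\sf corrupt},pk)$ is forwarded verbatim and its returned nonce passed back. When $\cA$ outputs its Phase-I triplet $(R^*,m^*,\sigma^*)$, the simulator outputs the very same triplet as its own forgery and halts, without ever entering Phase II. This is sound because the framing signature $\sigma^*$ is generated while the secret keys of $R^*$ are still hidden from $\cA$; Phase II, in which all secrets are revealed and $\cA$ exhibits the linking signature $\sigma^\dagger$, only serves to \emph{certify} that the forgery targets an honest user, and is not needed by $\cS$.

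The crucial point is that the link requirement of ${\sf Exp_{LRS-Frame}}$ is merely an \emph{additional} constraint on $\cA$: even ignoring it, a framing win already yields a signature $\sigma^*$ that verifies under the entirely uncorrupted ring $R^*\subseteq\{pk_1,\dots,pk_n\}$ and is not a genuine signing-oracle output, i.e.\ a forgery. The conditions $R^*\subseteq\{pk_1,\dots,pk_n\}$ and ${\sf Verify}(R^*,m^*,\sigma^*)=1$ transfer directly, and since every key of $R^*$ is uncorrupted, $\cS$ issued no corruption query on those keys. Moreover, the same rewinding used in the unforgeability proof shows that any valid $\sigma^*$ over an uncorrupted ring necessarily carries the honest tag $T^*=sk_\pi\cdot L^*$ of some $pk_\pi\in R^*$ (where $L^*=\mathit{H2P}(\mathit{HPK}(R^*))$), since the forked collisions at the escape index $\pi$ simultaneously give $sk_\pi=(s_\pi-s_\pi^*)/(c_{\pi-1}^*-c_{\pi-1})$ and $T^*=sk_\pi\cdot L^*$. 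Thus the link condition is automatically consistent and incurs no loss in the reduction.

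The main obstacle is the mismatch between the disqualifying events of the two games: non-framability only forbids $\sigma^*$ from being a \emph{verbatim} signing-oracle output, whereas unforgeability forbids $\cA$ from having made \emph{any} query of the form $({\sf sign},\cdot,m^*,R^*)$. I would settle this by a case split. If $\cA$ never queried the oracle on $(m^*,R^*)$, then $\cS$ satisfies every winning condition of ${\sf Exp_{LRS-Unf}}$ and wins outright. If $\cA$ did query $(m^*,R^*)$, then by the tag analysis above the matching query must have been signed by the honest user $\pi$ (a different signer $j$ would give $sk_j\cdot L^*\ne sk_\pi\cdot L^*=T^*$ with overwhelming probability), and $\sigma^*$ is a \emph{distinct} valid signature sharing that tag; reproducing the Schnorr-type chain with fresh $(s_i,c)$ but the same tag still requires $sk_\pi$, so the forking extraction of the unforgeability proof (via the forking lemma for ring signatures of~\cite{TesisJavi}) recovers the discrete logarithm of $pk_\pi$ and hence, through the DLOG embedding of that simulation, solves the challenge instance. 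In either branch a framing adversary yields a DLOG solver, establishing non-framability in the random oracle model under the DLOG assumption, as claimed. This last branch --- handling re-randomized signatures that reuse an honest tag --- is where I expect the real work to lie.
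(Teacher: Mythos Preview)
Your approach is essentially the paper's: reduce non-framability to unforgeability by observing that the Phase-I output $(R^*,m^*,\sigma^*)$ already constitutes a forgery on an uncorrupted ring, so the simulator can simply forward it and ignore Phase~II. The paper's own proof does exactly this, in fewer lines and with less care.

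Where you differ is in noticing, and attempting to close, a gap that the paper silently skips. The paper asserts that $(R^*,m^*,\sigma^*)$ is a winning triplet for ${\sf Exp_{LRS-Unf}}$ and lists as one of the satisfied conditions ``$\sigma^*$ was not part of the output of a signing query''---but that is the non-framability condition, not the unforgeability one, which instead requires that \emph{no} signing query of the form $({\sf sign},\cdot,m^*,R^*)$ was ever made. You correctly flag this mismatch and propose a case split: if no such query occurred, the reduction goes through verbatim; if it did, you bypass the unforgeability experiment and argue directly via forking and tag extraction that the discrete logarithm of some honest $pk_\pi\in R^*$ can be recovered. That second branch is genuine additional work (and, as you say, the place where the effort lies), and the paper does not supply it. So your proposal is the same reduction, executed more carefully than the paper itself.
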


\begin{proof}
Out main argument is that for ${\sf Exp_{LRS-Frame}}$ to output $1,$ $\cA$ must have actually forged a signature under one of the public verification keys presented by the challenger. 
Indeed, notice that in Step 3 of  Phase-1 of ${\sf Exp_{LRS-Frame}}$  the adversary $\cA$ must output a triplet $(R^*, m^*, \sigma^*)$ from identical inputs to those presented in an execution of ${\sf Exp_{LRS-Unf}},$ i.e. : 
\begin{enumerate}
\item On input the security parameter $\ell,$ the challenger $\mathcal{C}$ makes $n$ step by step executions of  $\textsf{Keygen}(1^{\ell}),$ for some $n\in \mathbb{N}$  polynomial in $\ell,$ and gets as output the corresponding pairs $(pk_i,sk_i), i=1,\dots, n.$ Also, it stores the random coins involved in each key-pair generation $r_i.$ Further, it initializes a set $C = \emptyset$.
The public verification keys $\{pk_1,\dots, pk_n\}$ are forwarded to $\mathcal{A}.$ 
\item $\mathcal{A}$ makes a polynomial number of queries to $\mathcal{C}$:
\begin{itemize}
    \item Signing query $({\sf sign}, pk, m, R)$, for a message $m$ and a set of public keys $R.$  The challenger $\mathcal{C}$ will check if $pk\in\{pk_1,\dots, pk_n\} $ and, if so, output  $\textsf{Sign}(sk, m, R)$  where $sk$ is the matching secret key to $pk.$ %\mb{CHECK: this was not said by the authors} 
    Otherwise, it outputs an error message $\perp.$ 
\item Corruption query $({\sf corrupt}, pk),$  for a public key $pk\in \{pk_1,\dots, pk_n\}$. Now $\mathcal{C}$  adds $pk$ to $C$ and returns to $\mathcal{A}$ the corresponding secret key $sk$. 
\end{itemize}
\item $\mathcal{A}$ outputs a triplet $(R^*,m^*,\sigma^*).$
\end{enumerate}

Indeed, now the triplet  $(R^*,m^*,\sigma^*)$  is a winning triplet in ${\sf Exp_{LRS-Unf}},$ for 
     \begin{itemize}
         \item ${\sf Verify}(R^*,m^*, \sigma^*) = 1$  
         \item $R^*\subseteq \{pk_1,\dots, pk_n\}\setminus C$ (i.e., al verification keys in $R^*$ are from uncorrupted users),
         \item $\sigma^*$ was not part of the output of a signing query. 
              \end{itemize}
Thus, non-framability is derived directly from unforgeability, and so if follows from  the DLOG assumption in the underlying cyclic group $E = (G)$. 
\end{proof}

\end{document}